\documentclass[aps,prx,twocolumn,superscriptaddress]{revtex4-2}
\usepackage{tikz}
\usetikzlibrary{calc}
\usepackage[english]{babel}
\usepackage[utf8]{inputenc}
\usepackage{indentfirst}
\usepackage{amsmath}
\usepackage{amssymb}
\usepackage{amsthm}
\usepackage{thm-restate}
\usepackage{thmtools}
\usepackage{eufrak}
\usepackage{physics}
\usepackage{bbold}
\usepackage{cancel}

\newcommand{\ZZ}{\mathcal{Z}}

\newcommand{\refst}{\ket{\emptyset}}

\newcommand{\AAA}{\mathcal{A}}

\newcommand{\NN}{\mathcal{N}}

\newcommand{\complex}{\mathbb{C}}

\newcommand{\valos}{\mathbb{R}}
\newcommand{\egesz}{\mathbb{N}}
\newcommand{\Zk}{\mathbb{Z}_2}
\newcommand{\eps}{\varepsilon}

\newcommand{\ordo}{\mathcal{O}}

\usepackage{amsthm}
\newtheorem{thm}{Theorem}

\newtheorem{conje}{Conjecture}

\newcommand{\llangle}{\langle\!\langle}
\newcommand{\rrangle}{\rangle\!\rangle}
\newcommand{\HS}[2]{{\llangle #1 | #2 \rrangle}}

\newcommand{\ffda}{\mathcal{A}_{FFD}}
\newcommand{\tffda}{\tilde{\mathcal{A}}_{FFD}}

\newtheorem*{theorem*}{Theorem}

\usepackage{ifpdf}

\ifpdf
\usepackage{epstopdf}
\usepackage[pdftex,colorlinks,urlcolor=blue,citecolor=blue,linkcolor=blue]{hyperref}
\else
\usepackage[hypertex,colorlinks,urlcolor=blue,citecolor=blue,linkcolor=blue]{hyperref}
\fi
\newcommand{\elte}{\affiliation{MTA-ELTE “Momentum” Integrable Quantum Dynamics Research Group,\\
    ELTE E\"otv\"os Lor\'and University, Budapest, Hungary}}
\newcommand{\wigner}{\affiliation{ Holographic Quantum Field Theory Research Group,\\
    HUN-REN Wigner Research Centre for Physics, Budapest, Hungary}}
\newcommand{\tokyo}{\affiliation{Department of Physics, Graduate School of Science, The University of Tokyo, \\ 7-3-1, Hongo, Bunkyo-ku, Tokyo, 113-0033, Japan}}
\begin{document}

\title{
Solving models with generalized free fermions I: Algebras and eigenstates 
}
\author{Kohei Fukai}
\tokyo
\author{Bal\'azs Pozsgay}
\elte
\author{Istv\'an Vona}
\elte\wigner

\begin{abstract}
We study quantum spin chains solvable via hidden free fermionic structures. We study the 
algebras behind such models, establishing connections to the mathematical literature of the so-called ``graph-Clifford''
or ``quasi-Clifford'' algebras. We also introduce the ``defining representation'' for such algebras, and show that this
representation actually coincides with the terms of the Hamiltonian in two relevant models:
the XY model and the ``free fermions in disguise'' model of Fendley. Afterwards we study a particular anti-symmetric
combination of commuting Hamiltonians; this is performed in a model independent way. We show that for this combination
there exists a 
reference state, and few body eigenstates can be created by the fermionic operators. Concrete application is presented
in the case of the  ``free fermions in disguise'' model.
\end{abstract}

\maketitle

\section{Introduction}

There are numerous lattice models which appear to be strongly interacting, nevertheless they admit
a complete solution using free fermions. Such models belong to the simplest many-body systems, and their exact
solvability is invaluable for theoretical physics. Famous examples of such models include the two dimensional
statistical physical Ising model (whose transfer matrix can be diagonalized using free fermions
\cite{Schulz-Mattis-Lieb}), the XX and XY quantum spin chains \cite{XX-original}, Kitaev's chain \cite{kitaev-chain} or
Kitaev's honeycomb model \cite{kitaev-honeycomb}.

These systems are spin-1/2 lattice models, in which local fermionic
operators are introduced by the Jordan-Wigner transformation  \cite{jordan-wigner}. The relevant Hamiltonians can be
expressed as bi-linear in these fermionic operators, and the diagonalization can be performed effectively by finding the
eigenmodes in the fermionic space. In many-body physics a number of works considered generalizations of the JW
transformation, see for example \cite{japan-JW-gen-1,japan-JW-gen-2,japan-JW-gen-3,japan-free-fermion-JW,chapman-jw}.

In the recent work \cite{ffd} Fendley introduced a new spin chain model, dubbed ``free fermions in disguise'' (FFD). This
model was shown to be free fermionic
\cite{ffd}, but its solution lies beyond the Jordan-Wigner paradigm \cite{japan-free-fermion-JW,chapman-jw,fermions-behind-the-disguise}. The model is the first and simplest one in a large
family of new lattice models, all of which have a hidden free fermionic structure
\cite{alcaraz-medium-fermion-1,alcaraz-medium-fermion-2,fermions-behind-the-disguise,free-fermion-subsystem-codes,unified-graph-th,sajat-FP-model,sajat-claws}
(for earlier examples see \cite{cooper-anyon,gyuri-susy-1,gyuri-susy-2}). The models typically have a large non-abelian
symmetry algebra; in the case of the FFD model the algebra was worked out recently in \cite{eric-lorenzo-ffd-1}.

In these models the fermionic operators are highly non-local in both the spin basis and also when expressed using
the standard Jordan-Wigner fermions. Therefore it was not obvious that they could be useful for practical calculations
beyond diagonalizing the Hamiltonian. However, it was shown in \cite{sajat-ffd-corr} that a subset of local operators can be
expressed using the fermionic eigenmodes, leading to exact results for correlation functions. These results can also be used in
special quantum circuits \cite{sajat-floquet} which retain free fermionic structures \cite{sajat-claws,eric-lorenzo-circuits}.

Remarkably, the formalism of \cite{ffd,fermions-behind-the-disguise,unified-graph-th} (see also \cite{free-parafermion})
appears to be disconnected from
traditional methods for free fermions. The new method was only applied to systems with open boundary conditions, and
even in these cases the connection to standard Jordan-Wigner solvability has not yet been made explicit.

Furthermore, despite all the progress in this topic, explicit eigenstates in such models have not yet been constructed. The
reason for this lies in the absence of a reference state for the original Hamiltonian of Fendley. A similar situation
also persists for many Jordan-Wigner solvable models: if there is no reference state, then there is no notion of a
few-body wave function. Instead, eigenstates could be constructed by taking a random vector in the Hilbert state,
acting with a fermionic creation or annihilation operator for every eigenmode. For example, the ground state can be
reached by acting with all annihilation operators \cite{takahashi-book}.

\bigskip

We address these questions in a series of papers, consisting of three parts.

This is the first paper in the series. Here we treat the abstract algebras behind these models. They are called
graph-Clifford algebras  \cite{graph-clifford} or quasi-Clifford algebras \cite{quasi-clifford-phd,quasi-clifford} (see
also \cite{huber-quasi-clifford}). Following  \cite{graph-clifford} we establish a number of key theorems about the
structure of these algebras, their operator content, and their representations. In particular, we introduce the
{\it defining representation} for the tensor product of two copies of a graph-Clifford algebra, and show that the XY
model and the FFD model have Hamiltonians which originate from these representations. Afterwards we study a particular
anti-symmetric combination of two commuting Hamiltonians. We show that it has a reference state, and few body
excitations can be created on top this reference state by acting with  the fermionic operators. 

In the second paper we develop a new representation of the fermionic operators, which we call the {\it path-product
  expansion}.  We also present new and independent
proofs for the fermionic solvability of the models in question. Furthermore, we study the spatial structure of the
operators and eigenstates in the FFD model.

In the third paper we establish a connection with the well known methods applied in Jordan-Wigner solvable models. We
show that using the results of the second paper, the standard formulas for the fermionic operators are reproduced in
selected cases. However, we also encounter a surprising result: in a particular version of the anisotropic (and possibly
inhomogeneous) XY model our formalism actually leads to the construction of fermionic states in the so-called chiral
basis \cite{xy-helix-basis}. This basis is obtained by adding excitations over spin helices.

\section{Setting and earlier results}

\label{sec:prev}

In this Section we introduce the framework and summarize previously known results.

\subsection{Models and frustration graphs}

We are dealing with lattice models, where a Hamiltonian is defined as
\begin{equation}
  \label{Hdef}
  H=\sum_{j=1}^M  b_jh_j.
\end{equation}
Here $b_j\in\valos$ are coupling constants, and the $h_j$ are Hermitian operators, which are assumed to be localized in
a lattice. We often call the $h_j$ as {\it
  generators} of particular algebras. They satisfy special relations, which are motivated by experience from spin-1/2
systems.

In a spin-1/2 lattice model the space of operators is spanned by the Pauli-strings: product of Pauli operators acting on
different sites. Every Pauli string is Hermitian, it squares to the identity, and two Pauli strings either commute or
anti-commute. This gives the motivation for the following abstract relations, which we require independent of
representation:
\begin{equation}
  \begin{split}
    h_j^\dagger=h_j,\quad (h_j)^2=1,\qquad j=1\dots M
  \end{split}
\end{equation}
and
\begin{equation}
  h_jh_k=(-1)^{A_{j,k}}h_kh_j.
\end{equation}
Here $A_{j,k}=0, 1$  encodes whether we have commutation or anti-commutation. Hermiticity is required only for the
physical interpretation as a Hamiltonian, and most of our computations do not require this property.

The structure of such  an algebra can be conveniently encoded in the {\it frustration graph}, which will be denoted
generally as $G$. It is defined as the
graph with $M$  vertices, where two vertices $j$ and $k$ are connected if the corresponding generators anti-commute,
i.e. $A_{j,k}=1$. In such a cases we have
``frustration'' or ``interaction'' between the two terms in the Hamiltonian, because they can not be diagonalized
simultaneously.

Such algebras are known in the mathematical literature as {\it quasi Clifford algebras}
\cite{quasi-clifford-phd,quasi-clifford} or {\it graph-Clifford algebras} \cite{graph-clifford}. In the following Section
 we discuss the mathematical structure of these algebras.
Here we just put forward that the abstract algebra has finite dimension $2^M$, and it is
always possible to find concrete matrix algebra representations.
We will work with representations on Hilbert spaces of spin chains, and $L$ will denote the length of the spin chain. 
In Section \ref{sec:graph-clifford} we introduce the ``defining representation'' on a spin chain with length $L=M$. However,
often it is possible to find representations with smaller $L$. 

The key properties of the model defined by \eqref{Hdef} are independent of the presentation of the algebra, and will
depend only on
the frustration graph. This includes the integrability of the model, the free fermionic nature, and the concrete
numerical eigenvalues of the Hamiltonian. However, the degeneracies of the eigenvalues of the Hamiltonian can depend on
the presentation of the algebra.

Interestingly, the key properties of integrability and also free fermionic nature will not depend on the coupling
constants $b_j$; they can be regarded as arbitrary numbers. This is in stark contrast with usual integrable
models, which require a spatial homogeneity of the coupling constants.

In many of our examples we will also have a second copy of the same algebra, with generators $\tilde h_j$. These generators
satisfy exactly the same relations. In other words, the models defined by $h_j$ and $\tilde h_j$ have the same
frustration graph. Furthermore, the two copies will be independent from each other:
\begin{equation}
  \label{htildeh}
  [h_j,\tilde h_k]=0.
\end{equation}
We will also investigate the Hamiltonian
\begin{equation}
  \label{Htdef}
 \tilde H=\sum_{j=1}^M  b_j\tilde h_j
\end{equation}
and the anti-symmetric combination
\begin{equation}
  \label{HAdef}
  H_A=H-\tilde H=\sum_{j=1}^M   b_j(h_j-\tilde h_j).
\end{equation}
We will see that we can compute exact eigenstates in a simple form for the anti-symmetric Hamiltonians.

\subsection{Examples}

Here we mention three specific examples for the abstract framework. These are the XY model (in a particular presentation),
the quantum Ising chain, and original ``free fermions in disguise'' model of Fendley.

\subsubsection{The XY model via the Dzyaloshinskii–Moriya interaction}

Let us consider a Hamiltonian given solely by the Dzyaloshinskii–Moriya interaction term:
\begin{equation}
  \label{XYH}
  H_A=\sum_{j=1}^M (Y_jX_{j+1}-X_jY_{j+1}).
\end{equation}
We impose periodic boundary conditions, and for simplicity we used here homogeneous couplings.

We identify the generators $h_j$ and $\tilde h_j$ as
\begin{equation}
  h_j=Y_{j-1} X_{j},\quad \tilde h_j=X_j Y_{j+1}.
\end{equation}

It is easy to see that \eqref{htildeh} holds, and that the two sets of generators satisfy the same algebra.
The algebra is given by
\begin{equation}
  \label{isingXY}
  \begin{split}
    \{h_j,h_{j+1}\}&=0,\\
    [h_j,h_k]&=0,\quad\text{if}\quad |j-k|>1,
  \end{split}
 \end{equation}
and in this case periodic boundary conditions are understood for the indices.
 The frustration graph is a circle of length $M$, and it is shown in
Figure \ref{fig:circle}.

\begin{figure}[t]
  \centering
  \begin{tikzpicture}
  % ellipse radii (half-width and half-height)
  \def\a{2}    % semi-major axis (x)
  \def\b{1}  % semi-minor axis (y)
  \def\n{7}    % number of vertices

  % create coordinates for the n vertices placed evenly around an ellipse
  \foreach \i in {0,...,6} {
    \pgfmathsetmacro{\angle}{360/\n * \i} % degrees
    \coordinate (v\i) at ({\a*cos(\angle)},{\b*sin(\angle)});
  }

\draw[thick] (0,0) ellipse ({\a} and {\b});

\foreach \i in {0,...,6} {
    \node[draw, thick,circle, fill=white, inner sep=2pt] at (v\i) {};
  }

\end{tikzpicture}
  \caption{Frustration graph for the XY and Ising models with periodic boundary conditions.}
  \label{fig:circle}
\end{figure}
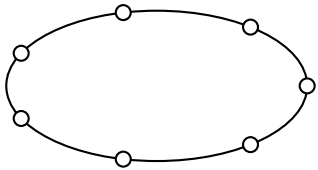

\subsubsection{The quantum Ising chain}

We consider $M=2L$ and define the Hamiltonian acting on a spin-1/2 chain of length $L$ as
\begin{equation}
  H=\sum_{j=1}^L  (X_jX_{j+1}+g Z_{j+1}).
\end{equation}
We assume periodic boundary conditions, and also homogeneous couplings. The generators can be identified as
\begin{equation}
  \label{IsingH}
  h_{2j-1}=X_jX_{j+1},\quad h_{2j}=Z_{j+1}.
\end{equation}
Inspecting the commutation relations of the generators we see that they satisfy the same algebra as in the previous
case: Neighboring generators anti-commute, otherwise they commute.  Therefore, the frustration graph is the circle of
length $M=2L$.

A crucial difference is that now we can not introduce an independent set of generators $\tilde h_j$ within the same
representation. This has to do with
the fact that now the representation acts on the Hilbert space of a chain with length $L=M/2$, and
the $h_j$ generate almost the complete operator algebra of the chain (a precise statement will be given
later in Subsection \ref{sec:isingxy}).

\subsubsection{Free fermions with all-to-all interactions}

Let us now consider a family of Majorana fermions $\chi_a$, with $a=1\dots M'$. They are defined via the relations
\begin{equation}
  \chi_a^\dagger=\chi_a,\qquad \{\chi_a,\chi_b\}=2\delta_{a,b}.
\end{equation}
If we would construct the frustration graph for this family of operators, we would obtain a complete graph with $M'$
vertices. However, instead of taking linear combinations of the $\chi_a$, we consider Hamiltonians that are bilinear in the
Majorana fermions. In this case it is convenient to index the terms of the Hamiltonian with a pair $(a,b)$, such that
$a<b$, and we write
\begin{equation}
  H=\sum_{a<b} B_{ab} h_{ab},\qquad h_{ab}=i\chi_a\chi_b.
\end{equation}
We have $M=M'(M'-1)/2$ terms, the $B_{ab}$ are real coupling constants. Note that the generators of the Hamiltonian are
Hermitian, due to the factor of $i$ above.

The frustration graph of this model is the so-called {\it line graph} associated to the complete graph. Now the vertices
are given by the pairs $(a,b)$, which are the edges  of the original complete graph.
In the line graph two different pairs $(a,b)$ and $(c,d)$ are connected by an edge, if
they share exactly one original vertex. If the four indices $a,b,c,d$ are all different, then the two terms $h_{ab}$ and
$h_{cd}$ commute with each other, thus the pairs are not connected in the line graph.

$M'=2L$ Majorana fermions can be realized on a spin chain of length $L$ by the well known Jordan-Wigner transformation:
\begin{equation}
  \begin{split}
    \chi_{2j-1}&=Z_1Z_2\dots Z_{j-1}Y_{j}\\
    \chi_{2j}&=Z_1Z_2\dots Z_{j-1}X_{j}.\\
  \end{split}
\end{equation}
The product of $Z$ operators that stretch from site 1 to site $j-1$ is called the {\it Jordan-Wigner string}.

\subsubsection{Free fermions in disguise}

The model in question was introduced by Paul Fendley in the seminal work \cite{ffd}.

Let us consider open boundary conditions, and the spin chain of length $L=M+2$. We introduce generators
\begin{equation}
  \label{bulkrep1}
  h_j=Z_{j-2}Z_{j-1}X_j,\qquad \tilde h_j=X_j Z_{j+1} Z_{j+2}.
\end{equation}
The two sets of generators commute with each other, and they satisfy the same
algebra. This algebra is given by the commutation relations
\begin{equation}
  \label{ffdalg}
  \begin{split}
    \{h_j,h_{j+1}\}=\{h_j,h_{j+2}\}=0\\
    [h_j,h_k]=0,\quad |j-k|>2\,.
  \end{split}
\end{equation}
The frustration graph of this algebra is depicted on Fig. \ref{fig:ffd} for $M=6$.

It is also possible to consider the model with periodic boundary conditions. That model is integrable, but it has not
yet been solved \cite{ffd}. In that case the frustration graph is a so-called circulant graph.

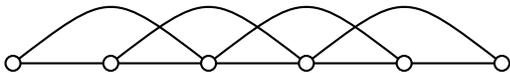
\begin{figure}[t]
  \centering

\begin{tikzpicture}
  \def\n{6}         % number of vertices
  \def\dx{1.3}      % spacing between vertices
  \def\arcB{1}    % arc height for next-to-neighbours

  % coordinates of vertices along a line
  \foreach \i in {1,...,\n} {
    \coordinate (v\i) at ({(\i-1)*\dx},0);
  }

  % draw edges: neighbours (distance 1) as horizontal lines
  \foreach \i in {1,...,5} {
    \pgfmathtruncatemacro{\j}{\i+1}
    \draw[thick] (v\i) -- (v\j);
  }

  % draw edges: next-to-neighbours (distance 2) as arcs above
  \foreach \i in {1,...,4} {
    \pgfmathtruncatemacro{\j}{\i+2}
    \path let \p1=(v\i), \p2=(v\j) in
      coordinate (c) at ({0.5*(\x1+\x2)},{\arcB});
    \draw[thick] (v\i) .. controls (c) .. (v\j);
  }

  % draw the vertices as small circles
  \foreach \i in {1,...,\n} {
    \node[draw,circle,fill=white,inner sep=2pt,thick] at (v\i) {};
  }

\end{tikzpicture}

  \caption{Frustration graph for the FFD model with $M=6$.}
  \label{fig:ffd}
\end{figure}

\subsection{Integrability}

Now we
discuss the integrability of the
Hamiltonian $H$ defined by \eqref{Hdef}.

Perhaps surprisingly, there is a single sufficient condition on the frustration graph, which ensures the existence of a
large family of conserved charges, independent of the coupling constants. This condition is that the graph should be claw-free
\cite{ffd,fermions-behind-the-disguise,unified-graph-th}. A claw is a special induced sub-graph, where 3 vertices are
connected to a central vertex, so that no two of the other three are connected (see Fig. \ref{fig:claw}). In terms of the Hamiltonian
the condition simply means that if there is a $h_a$ which anti-commutes with three different other generators, then it
must not happen that all three are commuting with each other. It was shown in
\cite{ffd,fermions-behind-the-disguise,unified-graph-th} that if the graph is claw-free, then the models are integrable
{\it for any choice of the coupling constants}.

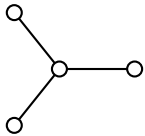
\begin{figure}[t]
  \centering
  \begin{tikzpicture}
  % central vertex
  \coordinate (c) at (0,0);

  % three outer vertices placed around
  \coordinate (v1) at (1,0);
  \coordinate (v2) at (-0.6,-0.75);
  \coordinate (v3) at (-0.6,0.75);

  % draw edges from center to each outer vertex
  \draw[thick] (c) -- (v1);
  \draw[thick] (c) -- (v2);
  \draw[thick] (c) -- (v3);

  % draw nodes
  \node[thick,draw,circle,fill=white,inner sep=2pt] at (c) {};
  \node[thick,draw,circle,fill=white,inner sep=2pt] at (v1) {};
  \node[thick,draw,circle,fill=white,inner sep=2pt] at (v2) {};
  \node[thick,draw,circle,fill=white,inner sep=2pt] at (v3) {};
\end{tikzpicture}
  \caption{A ``claw'' graph.}
  \label{fig:claw}
\end{figure}

This situation is to be contrasted with that of the most common integrable spin chains. For example, the Hamiltonian of
the standard Heisenberg XXZ model is written as
\begin{equation}
  H=\sum_{j=1}^L (X_jX_{j+1}+Y_{j}Y_{j+1}+\Delta Z_j Z_{j+1}).
\end{equation}
Here we can identify the Hamiltonian generators as
\begin{equation}
  h_{3j-2}=X_jX_{j+1},\quad  h_{3j-1}=Y_jY_{j+1},\quad  h_{3j}=Z_jZ_{j+1}.
\end{equation}
The resulting frustration graph has claws. For example, the operator $Z_{2}Z_3$ anti-commutes with $X_1X_2$, $Y_1Y_2$
and $X_3X_4$, while all three operators are commuting with each other. Therefore, the XXZ model does not fit into our
framework. This is in accordance with a key requirement of integrability, namely that the coupling constants
should be homogeneous. Instead, in our framework all couplings are free parameters, which is typical for free fermionic
models.

\subsubsection{Higher charges and transfer matrix}

Let us now construct the higher conserved charges of the models that we treat. For future use we introduce the notation
\begin{equation}
  {\mathtt H}_k=b_k h_k.
\end{equation}
We say that $Q_1$ is the Hamiltonian, in these notations
\begin{equation}
  Q_1=\sum_k   {\mathtt H}_k.
\end{equation}
The next conserved charge is given by
\begin{equation}
  Q_2=\mathop{\sum_{j<k}}_{A_{jk}=0}  {\mathtt H}_j {\mathtt H}_k.
\end{equation}
Note that the sum runs over those products where the two Hamiltonian terms commute with each other. This pattern can be
extended, and we define the higher charges as
\begin{equation}
  \label{Qalpha}
  Q_\alpha=\mathop{\sum_{a_1<\dots<a_\alpha}}_{A_{a_ja_k}=0}  {\mathtt H}_{a_1} {\mathtt H}_{a_2}\dots  {\mathtt H}_{a_\alpha}.
\end{equation}
where it is implicitly understood that we are summing over those products, where every term commutes with every other
term.  In terms of graph theory concepts, we are performing a sum over the so-called {\it independent sets} of the graph
$G$.

Such charges exist up to order $\alpha=S$, where $S$ is the so-called {\it independence number} of the graph. This is
the maximum number of vertices one can choose from the graph such that no two chosen vertices are connected by an edge.

Finally the transfer matrix is defined as the generating function
\begin{equation}
  \label{Tudef}
  T(u)=\sum_{n=0}^S (-u)^n Q_n.
\end{equation}
It was shown in \cite{ffd,fermions-behind-the-disguise,unified-graph-th} that if $G$ is claw free, then all these
charges commute with each other, therefore
\begin{equation}
  [T(u),T(v)]=0.
\end{equation}

We stress that claw-freeness is a sufficient, but not a necessary condition. There can be Hamiltonians where
similar structures can be developed, despite the graph having claws \cite{sajat-FP-model,sajat-claws}. In those cases
there are special relations between
the coupling constants, which ensure the commutativity of the charges.

The reader might wonder what is the connection between this formalism for the charges  and the transfer matrix, and
other, more traditional formulas in the theory of
integrable models. We will argue, that despite the apparent differences, in many models we can establish connections to
earlier results.

Let us consider for example the transfer matrices in standard Algebraic Bethe Ansatz \cite{faddeev-how-aba-works}. These
transfer matrices are usually given by a Matrix Product Operator (MPO) that uses the so-called Lax operator as the
fundamental tensor. It turns out that the $T(u)$ of \eqref{Tudef} can often be described also as an
MPO's (in the case of the FFD model see \cite{ffd} and Appendix \ref{sec:MPOT}).
A crucial point is that in standard integrability one typically considers periodic boundary conditions and
homogeneous couplings, and in those cases one obtains a larger set of conserved charges.

We should also note, that conserved charges with localized densities can be obtained by taking the logarithmic
derivatives of $T(u)$  \cite{ffd}. Currently there are no explicit expressions available in the
literature for these local charges.

\subsubsection{Inversion relation}\label{subsec:invrel}

Further useful properties can be derived if the graph $G$ satisfies another special relation, namely that it is
even-hole-free. An even hole (also called chordless cycle) is a cycle in the graph of even length, where only
consecutive vertices are connected, but there are no ``chords''.

It was shown in \cite{fermions-behind-the-disguise} that if $G$ is claw-free and even-hole-free, then the transfer
matrix satisfies an inversion relation
\begin{equation}
  \label{eq:inversion}
  T(u)T(-u)=P_G(u^2)\cdot \mathbb{1}\,.
\end{equation}
Here $P_G(x)$ is a polynomial of order $S$, and the index $G$ stands for the graph. The polynomial $P_G(x)$ is the weighted
independence polynomial of the graph. It can be evaluated by the essentially the same formulas as the charges and
the transfer matrix, with the only difference that the operators $H_k$ are to be replaced by the numbers
$H_k^2=b_k^2$. Concrete examples will be considered below.

In those cases, when the graph is claw-free but it contains even holes, the inversion relation becomes operator-valued,
where so-called  {\it cycle symmetries} appear on the r.h.s. of \eqref{eq:inversion}, see \cite{unified-graph-th}. In
this paper we do not consider these cases in
detail.

In the next Subsection we will show that the roots of these polynomials play a crucial role in solving the model: they
are directly related to the single-particle energy eigenvalues of the free fermionic models.

{In the models that appeared in the literature} 
the transfer matrices and also the
polynomials $P_G(x)$ can be obtained in a very effective way, using simple recursion relations. This means that they can
be obtained in polynomial time in the size of the graph $G$.

\subsection{Free fermions behind the disguise}

\label{sec:fermion-intro}

If  the frustration graph $G$ is claw-free and even-hole-free, then the Hamiltonian can be diagonalized using free
fermionic creation and annihilation operators \cite{ffd,fermions-behind-the-disguise}. In this Section we summarize this
construction.

The resulting formulas have a very different structure than the usual expressions for creation/annihilation
operators in traditional Jordan-Wigner solvable models \cite{XX-original,Schulz-Mattis-Lieb}. We reconcile the apparent
differences later follow up papers in this series.

\subsubsection{\label{sec:simplicial}Simplicial cliques}

One of the key ingredients of the free fermionic solution is a special subgraph of the frustration graph: the simplicial clique.
A simplicial clique is a clique (a subgraph which is a complete graph),
such that for every vertex within the clique, the neighbors of the given vertex outside the clique also form a
clique. It was proven in the work \cite{evenhole-simplicial} that a claw-free and even-hole-free graph always
contains a simplicial clique.

In practical terms, the simplicial clique is very often just one vertex in the graph, corresponding to a special
generator $h_k$. In this case the
requirement of the simplicial clique is that all other generators that anti-commute with $h_k$, should also mutually
anti-commute with each
other.

It was shown in the work \cite{unified-graph-th} that a generalized free fermionic solution exists even in those cases
when the graph $G$ contains even holes, if it also contains a simplicial clique. In these cases the Hilbert space splits
into different sectors, corresponding to different eigenvalues of so-called cycle symmetries. The actual free fermionic
solution is then constructed for each sector separately.
The
Jordan-Wigner solvable models with periodic boundary conditions belong to this class. In those models there are two
sectors, corresponding to odd and even fermion numbers.

\subsubsection{Construction of the fermionic operators}

Let us now turn to the diagonalization of the Hamiltonian $H$, assuming that the graph $G$
is claw-free and also even-hole-free, and also that a simplicial clique has been identified.

The construction will use a certain edge operator $\chi$. This is a
Hermitian operator satisfying $\chi^2=1$. It has to anti-commute with the generators in the simplicial clique, but it
should commute with all other generators. In models with open boundary conditions this operator is often localized to one of
the boundaries, hence its name. However, it is not necessarily  localized  in the 
physical space, it is only required that it is sufficiently local with respect to commutation with the generators.

In some cases $\chi$ can be chosen to lie within the algebra generated by the $h_k$. In other cases one can either embed
$\chi$ naturally into the operator algebra in the concrete representation,
or one can construct an extended representation with an additional spin \cite{unified-graph-th}. We will also encounter such
examples below.

If all ingredients are given, then the model can be solved by Dirac fermions as follows.

It was proven in
\cite{ffd,fermions-behind-the-disguise,unified-graph-th}
that  there exist Dirac fermions $\Psi_k$ and $\Psi_{-k}\equiv \Psi_k^\dagger$, $k=1, 2, \dots S$,
such that
\begin{equation}
  \label{Diracdef}
  \{\Psi_k,\Psi_\ell\}=\{\Psi_{-k},\Psi_{-\ell}\}=0,\quad  \{\Psi_k,\Psi_{-\ell}\}=\delta_{k,\ell},
\end{equation}
which diagonalize the Hamiltonian via
\begin{equation}
  \label{Hexpress}
  H=\sum_{k=1}^S   \eps_k [\Psi_k,\Psi_{-k}]\,.
\end{equation}
Here the $\eps_k$ are interpreted as (half of the) single-particle energies.
These energies can be computed
 by $\eps_k=(u_k)^{-1}$, where $u_k$ are the positive roots of a polynomial $P_{G}(u^2)$.

Using the edge operator the explicit formula for the fermionic operators is \cite{ffd}
\begin{equation}
  \label{Psidef}
  \Psi_{\pm k}=\frac{1}{\NN_k}T(\mp u_k) \chi T(\pm u_k),
\end{equation}
where $\NN_k$ is a known normalization factor given by
 \begin{align}\label{eq:normalization-factor-Psi}
    \mathcal{N}_k
     & =
       4 \sqrt{- u_k^2 P_{G \setminus K}(u_k^2) P'_{G}(u_k^2)}.
  \end{align}
Here $P'_G(x)$ denotes the derivative of the polynomial with respect to $x$. By $K$ we denote the simplicial clique
discussed in Section \ref{sec:simplicial}. The polynomial $P_{G\setminus K}(x)$ then corresponds to a frustration graph
$G$ with the simplicial clique removed.

A consequence of this diagonalization is that the eigenvalues of $H$ can be written as
\begin{equation}
  \label{energyformula}
  E=\sum_j \sigma_j \eps_j,
\end{equation}
where $\sigma_j=\pm$ and the signs can be chosen independently.

It is important that there are only $S$ fermionic eigenmodes for a given size $M$ of the algebra. Typically $S$ is much
smaller than the linear size of the spin chain on which the model is defined. In fact, very often $S$ grows linearly
with $M$, but the ratio is such that $S/L<1$, where $L$ is the length of the spin chain on which
the representation 
lives.
This implies that
each energy level is degenerate with the same level of degeneracy which increases exponentially with
$M$. The actual size of the degenerate sectors is $2^{L-S}$, which depends on the choice of representation, determining
the volume
$L$ for a given $M$. Exceptions to exponential degeneracies are given by the models solvable by Jordan-Wigner
transformation.

If there is an exponential degeneracy, then it implies an exponentially growing symmetry algebra behind the model. In
the case of the original ``free fermions in disguise'' model of Fendley this algebra was worked out recently in
\cite{eric-lorenzo-ffd-1}.

The previous results only concern the spectrum, but not the eigenvectors. Up to now there were practically no results in
the literature for the eigenvectors of such disguised free fermionic models. The reason for this is simple: typically
there is no reference state, which could serve as a starting point for the creation of fermionic wave functions.
This happens for example in the case of the quantum Ising chain.

One of our key results is that very often the anti-symmetric combination $H_A$ \eqref{HAdef}
does in fact have simple reference states. We thus propose to study this particular combination, and eigenstates are
constructed later in Section \ref{sec:eigenstates}.

\section{Graph-Clifford algebras and the defining representation}

\label{sec:graph-clifford}

Here we review basic statements about the structure of graph-Clifford algebras, following
\cite{quasi-clifford-phd,quasi-clifford,graph-clifford}. We also introduce a representation for the tensor product of
two identical algebras; this appears to be a new result of this work. We call it the ``defining representation'',
because it is obtained by letting the algebra act on itself.

\subsection{The structure of graph-Clifford algebras}

We say that an associative algebra $\AAA$ is a graph-Clifford algebra, if it is generated by $M$ elements $h_k$, $k=1,\dots,M$,
satisfying
\begin{equation}
  h_k^2=1,\qquad h_kh_j=(-1)^{A_{jk}} h_jh_k,
\end{equation}
where $A_{jk}=A_{kj}=0, 1$. The requirement of Hermiticity is not included in this definition; previously we
used it for the physical interpretation of the Hamiltonians.

We construct a graph with $M$ vertices, by adding an edge between vertices $j$ and $k$
precisely if $A_{jk}=1$. The symmetric matrix $A$ given by the elements $A_{jk}$ is the {\it adjacency matrix} of the graph.

It is an important consequence of the defining relations that the graph-Clifford algebras are finite dimensional as an abstract
algebra. Furthermore, every element in the algebra can be written as a linear combination of simple ordered
products. For the moment let
us focus on only one copy of the algebras. Then a basis of the algebra is given by the operator products
\begin{equation}
  \label{basiselement}    %previously: normalordered
  h_{a_1}h_{a_2}\dots h_{a_n},
\end{equation}
where $a_1<a_2<\dots<a_n$. This can be shown by taking an arbitrary finite product of the generators, commuting them to
obtain the desired ordering, and noting that in the final expression every $h_a$ can be present at most only once, due
to the relation $(h_a)^2=1$. It follows that the abstract algebra has dimension
$2^M$. Furthermore, its basis elements can be indexed by a bit string of length $M$, where we add a bit 1 or 0
corresponding to whether a specific generator $h_a$ is present in the product or not. We will use this correspondence to
bit strings in Section \ref{sec:opstate} where we develop an operator-state correspondence.

Every two basis elements either commute or anti-commute with each other. In order to determine the outcome of the
commutation one needs to collect the minus signs that arise. This can be read off from the adjacency matrix.

We introduce the concept called {\it commutation pattern}.
The commutation pattern of a basis element is a vector of length $M$, consisting of 0's and 1's, such that the $k$th
element describes the commutation/anti-commutation with the generator $h_k$. The commutation pattern of a generator $h_j$
coincides with the $j$th row of the matrix $A$. The commutation pattern of the basis element
\eqref{basiselement} is the sum of $n$ rows of $A$, corresponding to the indices $a_1, a_2, \dots, a_n$, evaluated
in $\Zk$.

Let us now present a few key theorems about the structure of graph-Clifford algebras, following
\cite{graph-clifford}. These theorems will be used in 
later Sections, where we discuss the faithfulness (and completeness) of certain representations, and the existence of
the edge operators. 

\begin{thm}
\cite{graph-clifford}  If the determinant of $A$ is an odd number (in $\egesz$), then the center of the algebra is trivial.
\end{thm}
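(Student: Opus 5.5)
The plan is to reduce the question about the center to a linear-algebra statement over $\Zk$, using the basis of ordered products \eqref{basiselement} and the commutation patterns introduced above. For a bit string $v\in\Zk^M$ write $h^v=h_{a_1}\cdots h_{a_n}$, where $a_1<\dots<a_n$ are the positions of the 1's in $v$. As noted above, $h^v$ either commutes or anti-commutes with each generator $h_k$. Its commutation pattern is $Av \bmod 2$, so $h_kh^v=(-1)^{(Av)_k}h^vh_k$.

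The first step is to show that if $Z=\sum_v c_v h^v$ is central, then every $h^v$ with $c_v\neq 0$ is itself central. Conjugation by $h_k$ (note $h_k^{-1}=h_k$) acts diagonally on the basis:
\begin{equation}
  h_k Z h_k=\sum_v (-1)^{(Av)_k} c_v h^v .
\end{equation}
Centrality of $Z$ gives $h_kZh_k=Z$ for every $k$. Since the $h^v$ form a basis, this forces $c_v=0$ whenever $(Av)_k=1$ for some $k$. Hence the center is spanned by those $h^v$ with $Av\equiv 0 \pmod 2$. Conversely, any such $h^v$ commutes with all generators and therefore lies in the center. This gives the identification
\begin{equation}
  \ZZ(\AAA)=\mathrm{span}\{h^v:\ v\in\ker_{\Zk}A\}.
\end{equation}

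The second step is to use the hypothesis. The determinant of $A$ over $\Zk$ is $\det A \bmod 2$, and this is nonzero exactly when the integer determinant is odd. In that case $A$ is invertible over $\Zk$, so $\ker_{\Zk}A=\{0\}$. The only surviving basis element is $h^0=1$, and so the center is $\complex\cdot 1$, which is trivial.

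The only point needing care is that the $2^M$ ordered products are genuinely linearly independent in $\AAA$, rather than merely spanning it. The paper takes the dimension $2^M$ as given, but this should be justified. I would do so by exhibiting a faithful representation, or more simply the regular action on the span of bit strings with explicit sign cocycle, and checking that it realizes the relations and sends the $2^M$ products to linearly independent operators. With independence in hand, both steps above are routine.
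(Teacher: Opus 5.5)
Your proposal is correct and follows the same route as the paper. Both reduce to a central basis element, whose commutation pattern $Av \bmod 2$ must vanish, which is impossible when $\det A$ is odd because then $A$ is invertible over $\Zk$. You also justify two steps the paper only asserts: that every basis component of a central element is itself central (your conjugation argument), and that the $2^M$ ordered products are linearly independent. The latter follows because the operators \eqref{leftm} satisfy the defining relations and send $h^v\refst$ to $\pm\ket{v}$.
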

\begin{proof}
We prove the theorem by contradiction. Let us assume that there exists a non-trivial central element in the
algebra. Then there must be a central element of the 
form \eqref{basiselement}. A central basis element has a commutation pattern given by zeros only. This element is
a product of the generators $h_k$,
therefore its commutation pattern is a sum over some rows of $A$ over $\Zk$. This implies that the determinant of $A$
over $\Zk$ is zero, therefore it is an even number over $\egesz$.
\end{proof}

\begin{thm}
  \label{thm:non-degenerate}
  If the determinant of $A$ is an odd number, then the algebra is non-degenerate, in the sense that every possible
  commutation pattern is present in $\AAA$.
\end{thm}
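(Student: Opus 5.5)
The plan is to reduce the statement to linear algebra over $\Zk$, using the bit-string indexing of basis elements introduced above. To a basis element $h_{a_1}h_{a_2}\dots h_{a_n}$ with $a_1<\dots<a_n$ we attach the vector $v\in\Zk^M$ whose $a$-th entry is $1$ exactly when $h_a$ occurs in the product. As already noted, the commutation pattern of this element is the sum over $\Zk$ of the rows $a_1,\dots,a_n$ of $A$. Since $A$ is symmetric, this sum equals $Av$ computed over $\Zk$. The map $v\mapsto Av$ is therefore exactly the map ``basis element $\mapsto$ commutation pattern''.

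The key step is the observation that the determinant of $A$ over $\Zk$ is the reduction mod $2$ of its integer determinant. The determinant is a polynomial in the entries with integer coefficients, and reduction mod $2$ is a ring homomorphism, so the two agree. Hence if $\det A$ is odd, $A$ is invertible as a matrix over the field $\Zk$. For any prescribed commutation pattern $w\in\Zk^M$ we then take $v=A^{-1}w$. The basis element built from the support of $v$ has commutation pattern $Av=w$, so every pattern occurs.

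We can add a remark: $A$ is bijective on $\Zk^M$, so each pattern is realized by exactly one basis element up to the ordering convention. This recovers the previous theorem, since the zero pattern is realized only by the identity, so the center is trivial. There is no real obstacle here. The only points needing care are the mod-$2$ reduction of the determinant and the fact that the pattern is computed with rows, which is harmless because $A$ is symmetric.
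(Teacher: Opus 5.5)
Your proof is correct, but it takes a slightly different route from the paper.

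\textbf{The paper's route.} It argues by contradiction with a counting argument. There are $2^M$ basis elements and $2^M$ possible patterns. If some pattern were missing, two distinct basis elements would share a pattern, and their product would be a non-trivial central element. This contradicts the previous theorem (odd determinant implies trivial center). The mod-$2$ reduction of the determinant is used inside the proof of that earlier theorem. In effect, the paper shows that the map $v\mapsto Av$ over $\mathbb{Z}_2$ is injective (trivial kernel) and then gets surjectivity from finiteness.

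\textbf{Your route.} You get surjectivity directly from the invertibility of $A$ over $\mathbb{Z}_2$. This is more direct and constructive:
\begin{itemize}
\item it gives the element with a prescribed pattern explicitly as $v=A^{-1}w$, which in practice is Gauss elimination over $\mathbb{Z}_2$, as the paper later uses for edge operators;
\item it gives uniqueness for free;
\item it does not need the dimension count $2^M$, nor the implicit fact that the product of two distinct basis elements is, up to sign, a non-trivial basis element.
\end{itemize}

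\textbf{What the paper's route buys.} It reuses the previous theorem and states everything in terms of the center of $\mathcal{A}$. That is the language the subsequent faithfulness results are written in.

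\textbf{A caveat on your closing remark.} Going from ``no non-identity basis element has zero pattern'' to ``the center is trivial'' needs one more step. You must show that a central element's basis expansion contains only central basis elements. This holds because conjugation by $h_k$ multiplies each of the linearly independent basis elements by a sign. The paper's proof of the previous theorem also takes this step for granted. It does not affect your proof of the present statement.
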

\begin{proof}
  The algebra has dimension $2^M$, which equals the number of all possible commutation patterns. If there is a
  commutation pattern which can not be found among the basis elements, then it means that two
different basis  elements have the same commutation pattern. Taking the product of these elements, we obtain an element
which has a commutation pattern consisting of 0's, therefore it is in the center. If $\det(A)$ is odd, then the center
is trivial, and with this we proved the theorem by contradiction.
\end{proof}

This implies that in the non-degenerate cases we can always embed any kind of edge operator into the algebra.

An even stronger theorem holds about the structure of graph-Clifford algebras. We don't use the general theorem here,
but we cite it for completeness:
\begin{thm}
  \cite{graph-clifford}
  Two graph-Clifford algebras are isomorphic, if the number of generators and the dimension of the center coincide.
\end{thm}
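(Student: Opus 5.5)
Rather than prove two algebras directly isomorphic, I will show that each graph-Clifford algebra is isomorphic to a \emph{normal form} that depends only on $M$ and on the dimension of the center. Write $r$ for the rank of $A$ over $\Zk$. The center is spanned by the basis elements \eqref{basiselement} whose commutation pattern is identically zero. Two basis elements have the same pattern exactly when their product is central. So the central basis elements are indexed by the kernel of $A$ over $\Zk$, and the center has dimension $2^{M-r}$. Fixing the dimension of the center therefore fixes $r$. The target is
\begin{equation*}
  \AAA\cong \mathrm{Cl}_r\otimes \complex[\Zk^{M-r}],
\end{equation*}
where $\mathrm{Cl}_r$ denotes the graph-Clifford algebra of $r/2$ disjoint edges, i.e.\ $r/2$ commuting copies of the Pauli pair $\{X,Z\}$, and the second factor is the group algebra of $M-r$ commuting involutions. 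Throughout I work over $\complex$, so that signs and factors of $i$ can be absorbed freely.

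The key step is a change of generators that puts $A$ into symplectic normal form. Adjoin the basis elements: for a subset $S\subseteq\{1,\dots,M\}$ put $g_S=h_{a_1}\cdots h_{a_n}$, with $S=\{a_1<\dots<a_n\}$. Its square is $\pm1$, so after multiplying by $i$ if necessary it squares to $1$. Two such elements $g_S,g_T$ commute or anti-commute according to the bilinear form $v_S^T A v_T$ over $\Zk$, where $v_S$ is the indicator vector of $S$. This form is alternating: it is symmetric and $v^TAv=\sum_{jk}A_{jk}v_jv_k=0$ mod 2, because $A_{jj}=0$ and the off-diagonal terms pair up. By the classification of alternating forms over $\Zk$ there is a basis $e_1,f_1,\dots,e_{r/2},f_{r/2},z_1,\dots,z_{M-r}$ of $\Zk^M$. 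In this basis $\langle e_i,f_i\rangle=1$, all other pairings vanish, and the $z$'s span the kernel; in particular $r$ is even.

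Next I take the corresponding normalized elements $g_{e_i},g_{f_i},g_{z_j}$. These are $M$ elements satisfying exactly the relations of the normal form. They generate $\AAA$: the $v$'s form a basis of $\Zk^M$, so each $h_k$ is, up to a scalar, a product of them. Since they satisfy the normal-form relations, there is a surjection from the normal-form algebra onto $\AAA$. Both algebras have dimension $2^M$, so this surjection is an isomorphism.

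The main obstacle is this dimension count: the argument needs the dimension of the abstract algebra to be exactly $2^M$, not merely at most $2^M$. This requires that the ordered monomials are linearly independent. I would establish it by building the regular representation explicitly on $\complex^{2^M}$ with basis indexed by bit strings, with $h_k$ acting by flipping bit $k$ together with a sign depending on the bits before $k$. One then checks the defining relations, and independence follows because distinct monomials send the empty string to distinct basis vectors. Given that, two algebras with the same $M$ and the same central dimension share the same $r$, hence the same normal form, and are therefore isomorphic.
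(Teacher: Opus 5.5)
Your proof is correct and complete. The paper gives no argument of its own here and only cites \cite{graph-clifford}, so your proposal is a self-contained replacement rather than a variant of a proof in the text.

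A more hands-on route would change generators one at a time, replacing some $h_b$ by a scalar multiple of $h_ah_b$ and tracking how the frustration graph changes until it becomes a disjoint union of edges and isolated vertices. That is the congruence $A\mapsto P^{T}AP$ over $\Zk$ carried out step by step. You do the whole reduction at once by recognizing $v_S^{T}Av_T$ as an alternating form and quoting its normal form. This makes the invariant transparent ($r=\mathrm{rank}_{\Zk}A$, equivalently central dimension $2^{M-r}$). It also exhibits the isomorphism class explicitly as $M_{2^{r/2}}(\complex)^{\oplus 2^{M-r}}$.

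Two further points work in your favor. First, your restriction to $\complex$ is essential, not just a convenience. The normalization $g_S\to ig_S$ is exactly where the field enters, and over $\valos$ the statement fails. The path and the triangle on three vertices both have $M=3$ and a two-dimensional center. However, their central elements $h_1h_3$ and $h_1h_2h_3$ square to $+1$ and $-1$ respectively, giving centers $\valos\oplus\valos$ and $\complex$.

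Second, your independence step is the paper's defining representation \eqref{leftm}, but used non-circularly. The paper asserts $\dim\AAA=2^M$ after showing only that the ordered monomials span. It also defines $\varphi$ on a basis not yet known to be independent. Verifying the defining relations for the operators in \eqref{leftm} directly, as you propose, supplies the missing lower bound.

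The one step you leave unjustified is that the center is spanned by central monomials. It follows because conjugation by $h_k$ acts diagonally on the monomial basis with eigenvalues $\pm1$.
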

\begin{proof}
  The proof is presented in \cite{graph-clifford}.
\end{proof}

\subsection{The defining representation}

\label{sec:opstate}

Here we establish an {\it operator-state correspondence}, and we also construct the {\it defining representation}
of the tensor product $\AAA\otimes\AAA$ of two copies of the algebra.
This representation is motivated by the examples of the XY and FFD models. However, in this
general form it appears to be new.

We proceed in three steps. First, we introduce the {\it operator--state correspondence} by mapping the
abstract algebra $\AAA$ to the Hilbert space of an auxiliary spin chain. Next, we derive how left- and
right-multiplication by the generators are represented as Pauli strings acting on this auxiliary chain. Finally, we
observe that the two actions commute, and therefore they provide a natural representation of the tensor product
$\AAA\otimes\AAA$.

\paragraph*{Operator--state correspondence.}
We define a linear map
$
  \varphi:\AAA\to \bigotimes_{j=1}^M \complex^2
$
by specifying its action on the standard basis elements \eqref{basiselement}. We identify the computational basis of the
auxiliary chain with bit strings, and we adopt the convention that $\ket{0}$ (resp.\ $\ket{1}$) corresponds to an up (resp.\ down) spin. For an ordered basis element $h_{a_1}h_{a_2}\dots h_{a_n}$ with $a_1<a_2<\dots<a_n$, we set
\begin{equation}
  \label{eq:phi-def}
  \varphi\!\left(h_{a_1}h_{a_2}\cdots h_{a_n}\right)
  =
  \ket{s_1 s_2 \cdots s_M}
  ,
\end{equation}
where $s_i = 1$ if $i \in \{a_1, a_2, \ldots, a_n\}$ and $s_i = 0$ otherwise. Equivalently, we can write $ \varphi\!\left(h_{a_1}\cdots h_{a_n}\right)
  =
X_{a_1}\cdots X_{a_n}
\ket{\emptyset}$
where $\ket{\emptyset} \equiv \ket{00\cdots0}$ is the all-up state and $X_{a}$ is the Pauli $X$ operator acting on the
$a$-th site of the auxiliary chain.
With this convention,
every ordered basis element is mapped to a computational basis state with coefficient $+1$, while other orderings can
differ by a sign. 
For a generic element of $\AAA$ we first expand it in the basis \eqref{basiselement} (using the anticommutation
relations to bring products into the standard order), and then extend $\varphi$ by linearity. 

Here we present a few simple examples of the operator-state correspondence for the case $M=4$.
The identity element is mapped to the all-up state:
\begin{equation}
\varphi(1)=\ket{0000}.
\end{equation}
Moreover, the basis element $h_2h_4$ corresponds to down spins at sites $2$ and $4$:
\begin{equation}
\varphi(h_2h_4)=\ket{0101}.
\end{equation}
For the non-ordered product $h_4h_2$, then using the relation $h_4h_2=(-1)^{A_{24}}h_2h_4$ we obtain
\begin{equation}
\varphi(h_4h_2)=(-1)^{A_{24}}\ket{0101},
\end{equation}
which illustrates how the commutation relations are translated into sign factors in the auxiliary Hilbert space.

Let us now consider linear operations on basis elements in $\AAA$. More concretely, we consider the left- and
right-multiplication by any generator $h_k$, and we derive how to represent these linear operations on the states of the
auxiliary spin chain. Once these actions are known, they can be extended to the most general linear operations within
$\AAA$.

\begin{thm}
  \label{thm:def-rep}
Left-multiplication with $h_k$ is represented by
\begin{equation}
  \label{leftm}
X_k\mathop{\prod_{j<k}}_{A_{jk}=1}Z_j
\,.
\end{equation}
Right-multiplication with $h_k$ is represented by
\begin{equation}
  \label{rightm}
X_k\mathop{\prod_{j>k}}_{A_{jk}=1}Z_j
\,.
\end{equation}
\end{thm}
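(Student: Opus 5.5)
It suffices to check the claimed operators on the basis vectors $\varphi(h_{a_1}\cdots h_{a_n})$, $a_1<\dots<a_n$, since $\varphi$ is a linear bijection between $\AAA$ and the auxiliary Hilbert space: it maps the basis \eqref{basiselement} onto the computational basis. So we fix such an ordered product, encoded by the bit string $s$ with $s_i=1$ iff $i\in\{a_1,\dots,a_n\}$, and compute $h_k\cdot h_{a_1}\cdots h_{a_n}$ and $h_{a_1}\cdots h_{a_n}\cdot h_k$ as $\pm$ an ordered basis element. We then compare with the action of \eqref{leftm} and \eqref{rightm} on $\ket{s}$.

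For left multiplication we move $h_k$ rightwards through the factors $h_{a_i}$ with $a_i<k$ until it reaches its ordered position. Each transposition with $h_{a_i}$ produces a factor $(-1)^{A_{a_i k}}$, so the total sign is $(-1)^{\sum_{j<k} A_{jk}s_j}$. Two cases arise. If $s_k=0$, $h_k$ is inserted and the result is the ordered element with bit $k$ flipped to $1$. If $s_k=1$, $h_k$ meets its copy and cancels via $h_k^2=1$, so the result is the ordered element with bit $k$ flipped to $0$. In both cases the bit string changes by flipping site $k$, which is the action of $X_k$. The sign depends only on the bits $s_j$ with $j<k$, and these are untouched by the flip, so it equals the eigenvalue of $\prod_{j<k,\,A_{jk}=1}Z_j$ on $\ket{s}$ (with $Z\ket{1}=-\ket{1}$, matching the down-spin convention). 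Since that product acts on sites $j\neq k$, it commutes with $X_k$ and the ordering in \eqref{leftm} does not matter. For right multiplication the argument is the mirror image: $h_k$ moves leftwards through the factors with $a_i>k$, giving the sign $(-1)^{\sum_{j>k}A_{jk}s_j}$ together with the same flip at site $k$. This reproduces \eqref{rightm}.

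The only delicate point is the case $s_k=1$. There one has to check that the cancellation happens exactly after passing the factors with index $<k$ (resp.\ $>k$) and none others, and that $A_{kk}=0$, so no spurious sign arises from $h_k$ meeting itself. Once this bookkeeping is done, linearity finishes the proof.

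We would also note that left and right multiplication commute by associativity. Hence the two families of Pauli strings commute, which can be confirmed directly: the $Z$-strings of $L_k$ and $R_j$ overlap with the opposite $X$ sites in a matching way, so the signs cancel. This commutation is what is needed later for the $\AAA\otimes\AAA$ representation, though it is not part of the present statement.
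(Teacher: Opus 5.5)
Your proof is correct and takes the same approach as the paper. You move $h_k$ to its ordered position and collect a sign $(-1)^{A_{jk}}$ from each occupied site on the left (resp.\ right) of $k$, which gives the $Z$-string, while insertion or cancellation via $h_k^2=1$ is exactly the bit flip $X_k$; you simply make the sign bookkeeping explicit.
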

\begin{proof}
  Consider the operation of multiplying an operator product in the algebra by the left:
  \begin{equation}
    h_k (h_{a_1}h_{a_2}\dots h_{a_n}).
  \end{equation}
This operator product can be transformed into one of the basis states \eqref{basiselement}. In order to find out which,
we need to perform a normal ordering, and move $h_k$ to position $k$. In performing the normal ordering we can pick up
signs, which depend on whether there are operators in the product, which could anti-commute with $h_k$. Possible signs
that arise are generated by the $Z$ operators acting on the auxiliary chain. Once $h_k$ is moved to its position there are
two possibilities: either $h_k$ was already present in the product or not yet. In the first case we get the identity
due to $(h_{k})^2=1$, in the second case we get simply $h_k$. The combination of both cases can be represented by the
operator $X_k$ acting on the auxiliary chain. This concludes the proof.

For the case of right-multiplication one needs to perform an analogous argument, in which case we can pick up phases from
the ``right'' of site $k$.
\end{proof}

Both \eqref{leftm} and \eqref{rightm} can be used as a representation of $\AAA$. However, since the left- and
right-multiplications commute with each other, it is natural to regard the formulas as a combined representation of the
tensor product $\AAA\otimes\AAA$. More precisely, elements of the form $h_k\otimes 1$ are represented by \eqref{leftm}
and elements of the form $\tilde h_k\equiv 1\otimes h_k$ by \eqref{rightm}.
With this we established the ``defining representation''.

The formulas \eqref{leftm}-\eqref{rightm} have some similarities with the Jordan-Wigner transformation. In fact, in the
special case when the
generators $h_k$ are Majorana fermions (every two generators anti-commute), and the graph $G$ is the complete graph over
$M$ vertices, then the formulas above are identical to the standard Jordan-Wigner transformation.

As further examples let us consider the XY/Ising and FFD algebras. In the case of the XY/Ising algebra (as given by
eq. \eqref{isingXY} without periodic boundary conditions) we obtain
\begin{equation}
  \begin{split}
    h_1&=X_1,\quad h_j=Z_{j-1}X_j,\quad j>1   \\
\tilde    h_M&=X_M,\quad \tilde h_j=X_j Z_{j+1},\quad j<M.
  \end{split}
\end{equation}
We could perform a global rotation to arrive at the representation
\begin{equation}\label{eq:XYrot}
  \begin{split}
	  h_1&=Y_1,\quad h_j={X}_{j-1}Y_j,\quad j>1   \\
	  \tilde    h_M&=Y_M,\quad \tilde h_j=Y_j {X}_{j+1},\quad j<M.
  \end{split}
\end{equation}
In this representation the anti-symmetric Hamiltonian $H_A=\sum_j b_j (h_j-\tilde h_j)$ is almost of the same form as
the one given by the standard form \eqref{XYH}. The only difference is in the special boundary terms. We will return to
the Hamiltonian given by these special generators in a follow up paper.  

In the case of the FFD algebra we obtain
\begin{equation}
  \label{FFDdefrep}
  \begin{split}
    h_1&=X_1,\quad h_2=Z_1X_2,\\
    &\qquad\quad h_j=Z_{j-2}Z_{j-1}X_j,\quad j>2   \\
    \tilde    h_M&=X_M,\quad\tilde h_{M-1}=X_{M-1}Z_M\\
    &\qquad\quad \tilde h_j=X_j Z_{j+1}Z_{j+2},\quad j<M-1.
  \end{split}
\end{equation}
This is almost identical to the representation given by \eqref{bulkrep1}, with the only difference being in the boundary terms.

\bigskip

Now we discuss the completeness of the defining representation. The full operator algebra of the spin chain has
dimension $4^M$, and the tensor product $\AAA\otimes\AAA$ also has a total dimension $4^M$. Therefore, the two algebras
could be isomorphic.
However, the
representation might not span the full operator algebra of the chain. This can happen if the representation is not
faithful, i.e. certain elements of the tensor product are mapped to the identity.

\begin{thm}
  If the determinant of $A$ is even, then the defining representation is not faithful.
\end{thm}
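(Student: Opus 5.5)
If $\det A$ is even, then $A$ is singular over $\Zk$. So there is a nonzero vector $c\in\Zk^M$ with $Ac=0$ over $\Zk$. Let $C=\{a_1<\dots<a_n\}$ be its support. The basis element $z=h_{a_1}h_{a_2}\cdots h_{a_n}$ has commutation pattern equal to the sum over $\Zk$ of the rows of $A$ indexed by $C$, which is $Ac=0$. Hence $z$ commutes with every generator and is central in $\AAA$; this is the converse direction of the first theorem of this Section. The plan is to exhibit an element of $\AAA\otimes\AAA$ built from $z$ which is not zero, but whose image under the defining representation is zero.

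The natural candidate is $z\otimes 1 - 1\otimes z$, or equivalently $(z\otimes 1)(1\otimes z^{-1})$ compared with the identity. Left-multiplication by $z$ and right-multiplication by $z$ are the same linear map on $\AAA$, since $zx=xz$ for every $x\in\AAA$. Under the operator--state correspondence $\varphi$, Theorem~\ref{thm:def-rep} says that the representing operators of $z\otimes 1$ and of $1\otimes z$ coincide as operators on the auxiliary chain. More concretely, both are equal, up to the same overall sign, to $\prod_{a\in C}X_a$ times some string of $Z$'s, because they implement the same map $\varphi(x)\mapsto\varphi(zx)=\varphi(xz)$. Therefore $z\otimes 1-1\otimes z$ is represented by zero.

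It remains to check that $z\otimes 1-1\otimes z\neq 0$ in the abstract tensor product. This holds because $z$ and $1$ are linearly independent basis elements of $\AAA$, since $C$ is nonempty. Then $z\otimes 1$ and $1\otimes z$ are distinct elements of the product basis of $\AAA\otimes\AAA$, so their difference is nonzero. If one prefers the phrasing that ``an element is mapped to the identity'', take instead the nontrivial element $z\otimes z$, which is different from $1\otimes 1$. Its image is the square of the common operator representing $z$, which is the operator of multiplication by $z^2=\pm1$; if needed, replace $z$ by $iz$ so that it squares to $+1$.

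The main point to verify carefully is the identification of the two representing operators. This is the only place where one might worry about sign conventions in \eqref{leftm}--\eqref{rightm}. I would avoid computing the $Z$-strings explicitly, and argue abstractly: the defining representation is, by construction, $L_x\circ R_y$ acting on $\AAA\cong\bigotimes\complex^2$ through $\varphi$, and $L_z=R_z$ for central $z$. Finally, the kernel being nonzero means the image has dimension less than $4^M$, so the representation is also not complete.
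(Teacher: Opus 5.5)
\textbf{The gap: $z\otimes 1$ and $1\otimes z$ need not have the same image.}
Your claim that these two elements are represented by the same operator is false in general, so your main kernel candidate fails. The defining representation is fixed on generators by $\tilde h_k=1\otimes h_k\mapsto R_{h_k}$. Right multiplication reverses order: $R_{h_a}R_{h_b}=R_{h_bh_a}$. Hence $1\otimes h_{a_1}\cdots h_{a_n}$ is represented by $R_{h_{a_n}\cdots h_{a_1}}$, not by $R_{h_{a_1}\cdots h_{a_n}}$. The assignment $x\otimes y\mapsto L_xR_y$ you invoke is a representation of $\AAA\otimes\AAA^{\mathrm{op}}$, not of $\AAA\otimes\AAA$.

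Reversing the product gives $h_{a_n}\cdots h_{a_1}=(-1)^{e(C)}z$, where $e(C)$ is the number of edges of $G$ inside $C$. So the image of $1\otimes z$ is $(-1)^{e(C)}R_z=(-1)^{e(C)}L_z$. Centrality ($Ac=0$ over $\Zk$) forces every vertex of $C$ to have even degree inside $C$, but it does not force $e(C)$ to be even.

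The simplest counterexample is three mutually anticommuting generators, i.e.\ the FFD algebra with $M=3$ (here $\det A=2$). Then $z=h_1h_2h_3$ is central. In the defining representation $h_1h_2h_3=-iX_1Y_2X_3$, whereas $\tilde h_1\tilde h_2\tilde h_3=+iX_1Y_2X_3$. So $z\otimes1-1\otimes z$ is mapped to $-2iX_1Y_2X_3\neq 0$. Note that the identity displayed in \eqref{k3id} omits this sign, $(-1)^{k+1}$, and so fails for even $k$; it cannot be taken as support for your identification.

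\textbf{The repair, and how it compares with the paper.}
Take instead $z\otimes1-(-1)^{e(C)}\,1\otimes z$. It is still nonzero, because $z\otimes 1$ and $1\otimes z$ are distinct product-basis elements, and it is mapped to $L_z-R_z=0$.

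Your fallback $z\otimes z$ also works, but not for the reason you give. It acts as $\varphi(w)\mapsto\varphi(z\,w\,h_{a_n}\cdots h_{a_1})=\varphi(w)$, so it is represented by exactly $\mathbb{1}$, and $z\otimes z-1\otimes 1$ lies in the kernel. Its image is not $z^2=(-1)^{e(C)}$, and your replacement $z\to iz$ would actually produce $-\mathbb{1}$. That is harmless for non-faithfulness, but it shows the bookkeeping is off.

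With this correction, your argument is a valid alternative to the paper's. The paper argues indirectly. $z\otimes 1$ is central in $\AAA\otimes\AAA$, and its image is not a scalar (it carries the $X$-string on $C$) yet commutes with the whole image. Hence the image cannot be the full $4^M$-dimensional operator algebra of the chain, and a dimension count forces a nonzero kernel. That route never compares the images of $z\otimes1$ and $1\otimes z$, so it is immune to the sign you missed. Your route, once the sign is fixed, exhibits the kernel explicitly: it consists of relations between $h$-products and $\tilde h$-products of the kind the paper lists for the FFD algebra.
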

\begin{proof}
If the determinant is even, then there is a non-trivial central element in the algebra.
This element is central also when embedded into the tensor product,
therefore it commutes with every element in the defining
representation. The representation of that central element can not be proportional to the identity, because the $X$
operators that appear in the representant act on different sites and the $Z$ operators that appear can not cancel
them. If the defining representation was complete, then we would have a non-trivial central element in the full
operator algebra of the chain. However, the only central elements in the operator algebra are the scalars. Therefore, the
representation can not be complete, and there has to be at least one non-trivial operator which is not in the
representation. This means that the representation has a dimension that is smaller than $4^M$, therefore it is not
faithful.
\end{proof}

\begin{thm}
  If the determinant of $A$ is odd, then the defining representation is faithful.
\end{thm}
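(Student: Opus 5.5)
We have to show that when $\det A$ is odd, the representation of $\AAA\otimes\AAA$ given by \eqref{leftm}--\eqref{rightm} is injective. Since $\AAA\otimes\AAA$ and the operator algebra of the auxiliary chain both have dimension $4^M$, this is equivalent to showing that the image is all of $\mathrm{End}\bigl(\bigotimes_{j=1}^M\complex^2\bigr)$.

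The plan is to work with the $4^M$ product basis elements $L(h_{a_1}\cdots h_{a_n})\,R(h_{b_1}\cdots h_{b_m})$. Here $L$ and $R$ denote the left and right multiplication representations. Each of these is a Pauli string up to a phase, being a product of the Pauli strings \eqref{leftm} and \eqref{rightm}. Pauli strings that are distinct modulo phases are linearly independent. So it suffices to show that these $4^M$ products give pairwise distinct Pauli strings modulo phases. Then they span the full $4^M$-dimensional operator space, and the representation is an isomorphism.

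To check distinctness, encode a Pauli string modulo phase by its pair $(x,z)\in\Zk^M\times\Zk^M$ of $X$-part and $Z$-part. Let $\alpha,\beta\in\Zk^M$ be the indicator vectors of the left and right index sets. From \eqref{leftm}, $L(h_k)$ has $X$-part $e_k$ and $Z$-part given by the strictly lower triangular part of column $k$ of $A$. From \eqref{rightm}, $R(h_k)$ has the strictly upper triangular part instead. Write $A=A_<+A_>$, with $A_<$ strictly lower and $A_>=A_<^T$ strictly upper triangular; the diagonal vanishes since $h_k^2=1$ means $A_{kk}=0$. The product then has
\begin{equation*}
  x=\alpha+\beta,\qquad z=A_<\alpha+A_>\beta \pmod 2 .
\end{equation*}
The map $(\alpha,\beta)\mapsto(x,z)$ is linear over $\Zk$, so we only need its kernel to be trivial. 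Suppose $x=0$. Then $\beta=\alpha$, and $z=(A_<+A_>)\alpha=A\alpha$. If $\det A$ is odd, $A$ is invertible over $\Zk$, so $z=0$ forces $\alpha=\beta=0$. The map is therefore a bijection of $\Zk^{2M}$, and all $4^M$ Pauli strings occur, each exactly once up to phase.

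An alternative argument works through the center. If the representation had a kernel, the image would be a proper subalgebra. This route relates to Theorem \ref{thm:non-degenerate} and the triviality of the center of $\AAA\otimes\AAA$, which holds because $\AAA$ has trivial center when $\det A$ is odd. However, the direct linear-algebra count above is cleaner and is the route I would take.

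The main point needing care is the bookkeeping of the $Z$-parts. Products of Pauli strings only change phases, so the $X$- and $Z$-parts genuinely add mod 2. We also use that $A_{kk}=0$, so $A_<+A_>=A$ holds exactly. Once this is in place, the argument reduces immediately to the invertibility of $A$ over $\Zk$. The same computation also recovers the converse theorem: a kernel vector $\alpha$ of $A$ mod 2 gives the central element $L(h^\alpha)R(h^\alpha)$, which is a pure phase times the identity, so it is a non-trivial element mapped to a scalar.
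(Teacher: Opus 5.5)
Your proof is correct and is essentially the paper's argument. Cancelling the $X$-parts forces $\beta=\alpha$, and the surviving $Z$-part $A\alpha$ is the same vector as the commutation pattern the paper invokes through centrality, so invertibility of $A$ over $\Zk$ rules out a nontrivial element being mapped to a scalar; your version also makes explicit the reduction from faithfulness to pairwise distinctness of the Pauli strings, which the paper leaves implicit. One harmless slip: by \eqref{leftm} the $Z$'s of the left-multiplication operator for $h_k$ sit on sites $j<k$, so the correct formula is $z=A_>\alpha+A_<\beta$ rather than $A_<\alpha+A_>\beta$, but this does not affect anything since only $A_<+A_>=A$ enters.
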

\begin{proof}
  Let us assume that there is an element in the tensor product $\AAA\otimes\AAA$, which is represented by a scalar. This can only happen
  if it is of the form
\begin{equation}
(h_{a_1}\tilde h_{a_1})  (h_{a_2}\tilde h_{a_2})\dots  (h_{a_n}\tilde h_{a_n}),
\end{equation}
because this is the only way to cancel the $X$ operators in the representant. This element has to be central in the
tensor product algebra, otherwise it could not be represented by a scalar. This also means that $h_{a_1}h_{a_2}\dots
h_{a_n}$ is central in $\AAA$. However, this can only happen if $\det(A)$ is even. Therefore we proved the theorem
by contradiction.
\end{proof}

The computation of $\det(A)$ is very direct and convenient, especially for small system sizes. However, for larger
systems a direct proof of the non-degeneracy of the matrix $A$ might be easier.

Let us now discuss a central implication of the operator-state correspondence. We consider a Hamiltonian  $H=\sum_{j=1}^M
b_jh_j$, and the anti-symmetric combination
\begin{equation}
  \label{HA1}
  H_A=\sum_{j=1}^M b_j (h_j-\tilde h_j)
\end{equation}
in the defining representation.

\begin{thm}
  Every operator in the abstract algebra $\AAA$ which commutes with $H$ is mapped to a null-vector of $H_A$, where it is
  understood that $h_j$ and $\tilde h_j$ of \eqref{HA1} are represented by \eqref{leftm}-\eqref{rightm}.
\end{thm}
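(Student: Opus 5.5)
The plan is to use Theorem \ref{thm:def-rep}, which gives the left and right multiplication actions of the generators on the auxiliary chain, together with linearity of the operator--state map $\varphi$. In the defining representation the operator \eqref{leftm} realizes left multiplication by $h_j$ on $\varphi(\AAA)$, and \eqref{rightm} realizes right multiplication. For every $O\in\AAA$ and every generator we therefore have
\begin{equation}
  h_j\,\varphi(O)=\varphi(h_jO),\qquad \tilde h_j\,\varphi(O)=\varphi(Oh_j).
\end{equation}
Here the $h_j,\tilde h_j$ on the left are the Pauli strings \eqref{leftm}-\eqref{rightm}. Strictly, Theorem \ref{thm:def-rep} is proven on the ordered basis elements \eqref{basiselement}. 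Since $\varphi$ is linear and the basis spans $\AAA$, the identity extends to arbitrary $O$. I will state this extension explicitly as the first step.

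Next I sum these identities with the coupling constants $b_j$. Linearity of $\varphi$ and of the multiplication maps gives
\begin{equation}
  H_A\,\varphi(O)=\sum_{j=1}^M b_j\bigl(\varphi(h_jO)-\varphi(Oh_j)\bigr)=\varphi\bigl(HO-OH\bigr)=\varphi([H,O]).
\end{equation}
In words, $H_A$ acting on the image of $\varphi$ is the image of the adjoint action $\mathrm{ad}_H$ on $\AAA$. If $[H,O]=0$, then $H_A\varphi(O)=\varphi(0)=0$. Hence $\varphi(O)$ is a null vector of $H_A$, and it is non-zero whenever $O\neq0$, because $\varphi$ maps a basis to a basis and is therefore injective.

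The only point needing care is the sign bookkeeping: the $Z$-strings in \eqref{leftm}-\eqref{rightm} must reproduce exactly the signs from normal ordering, including when $h_k$ is already present in the product. That is the content of Theorem \ref{thm:def-rep}, which I may assume, so I expect no real obstacle. I would also remark that the argument needs neither claw-freeness nor the parity of $\det(A)$. It also identifies $\ker H_A$ with $\varphi$ of the commutant of $H$ in $\AAA$, since $\varphi$ is a bijection onto the auxiliary Hilbert space. For example, the conserved charges $Q_\alpha$ and the transfer matrix $T(u)$ yield null vectors.
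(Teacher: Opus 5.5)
Your proof is correct and is the same argument as the paper's. By Theorem \ref{thm:def-rep}, the operator \eqref{HA1} represents $O\mapsto HO-OH$, so $H_A\varphi(O)=\varphi([H,O])$, which vanishes when $[H,O]=0$. Your additions are also valid: $\varphi$ is injective, and the converse identifies $\ker H_A$ with $\varphi$ of the commutant of $H$. One caveat on your closing example: $Q_\alpha$ for $\alpha\ge 3$, and hence $T(u)$, commute with $H$ only under claw-freeness (or special couplings). So those null vectors do depend on that assumption, even though the theorem itself does not.
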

\begin{proof}
  Left- and right-multiplication in the algebra is represented by formulas \eqref{leftm}-\eqref{rightm}, therefore the
  commutator with $H$ is represented by \eqref{HA1}. If the commutator is zero, then we get a null vector.
\end{proof}

A special case of this theorem is when we choose the identity operator. It is mapped to the reference state:
\begin{equation}
  \varphi(1)=\ket{\emptyset}\equiv\ket{000\dots 0}.
\end{equation}
It follows from the theorem that the
reference state is a null-vector of $H_A$ in the defining representation. This is the algebraic reason as to why we
find suitable reference states in our concrete examples, presented below.

\subsection{\label{subsec:edge}The edge operator}

The construction of the free fermionic operators requires an edge operator \cite{ffd,fermions-behind-the-disguise}.
Its definition is that it commutes with all generators except with those that belong to a selected simplicial clique in
the frustration graph, with which it anti-commutes. This means that the edge operator $\chi$ has a commutation pattern,
which consists of zeroes, except for those vertices that belong to the simplicial clique. Now we discuss whether $\chi$
can be chosen to lie within the algebra itself, and its possible representations within the defining representation or
its extensions.

As an abstract operator, the edge operator might exist within the original graph-Clifford algebra itself, or it might
need an extension of the 
algebra. There are two ways to determine which case holds. First, one could check the determinant of the adjacency
matrix. If it is an odd number, then it follows from Theorem \ref{thm:non-degenerate}  that every commutation pattern
is present, therefore the edge operator can be chosen as one of the basis elements. However, if the determinant is even,
then further computations are needed, because the edge operator might still exist within the algebra.
In such a case
we need
to check whether its commutation pattern is in the span of the rows of $A$. This is a linear algebra problem over $\Zk$,
which can be addressed by direct computation, by a Gauss elimination over $\Zk$.

Irrespective of whether $\chi$ is a member of the original algebra, it can always be represented in the defining
representation as
\begin{align}
  \label{eq:edgeop-in-rep}
  \chi = \prod_{j \in K} Z_{j},
\end{align}
where $K$ denotes the simplicial clique.

Alternatively, we can
also 
enlarge the defining representation
so that $\chi$ can be accommodated \cite{ffd,fermions-behind-the-disguise}. There are multiple ways of proceeding. We
choose a convention where we add a new qubit with index 0 to the model, and represent $\chi$ by
\begin{equation}
  \label{edgeadd1}
  \chi=X_0.
\end{equation}
At the same time, we also modify the defining representation of the generators as follows. We keep the representation of
$\tilde h_k$, but 
we represent $h_k$ by
\begin{equation}
  \label{edgeadd2}
  h_k=X_k(Z_0)^{s_k}\mathop{\prod_{j<k}}_{A_{j,k}=1}Z_j,
\end{equation}
where $s_k=1$ if $k$ is in the simplicial clique, and $s_k=0$ otherwise. This representation clearly satisfies the
desired commutation relations.

\subsection{Norms}

\label{sec:norms}

It is useful to introduce a scalar product and norm in the abstract algebra $\AAA$ as follows. We assign a norm equal to unity
for every basis element \eqref{basiselement}, and we say that two different basis elements are orthogonal to each other.
This scalar product is naturally conserved by the mapping to the auxiliary spin chain.

Let us also consider the defining representation of a single copy of the algebra $\AAA$.
In the defining representation we will use the Hilbert-Schmidt norm
\begin{equation}
  \HS{\ordo_1}{\ordo_2}=\frac{1}{2^M} \text{Tr}(\ordo_1^\dagger \ordo_2).
\end{equation}

\begin{thm}
  \label{thm:normcons}
  The operator-state correspondence conserves the norm.
\end{thm}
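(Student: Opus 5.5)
We have to show that the operator--state correspondence $\varphi$ is an isometry. The scalar product on $\AAA$ declares the ordered basis elements \eqref{basiselement} orthonormal. The target is the defining representation of one copy of $\AAA$, say via left multiplication \eqref{leftm}, equipped with the Hilbert--Schmidt product $\HS{\cdot}{\cdot}$.

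The plan is to compare, for two basis elements $b=h_{a_1}\cdots h_{a_n}$ and $b'=h_{c_1}\cdots h_{c_m}$, the abstract scalar product $\delta_{b,b'}$ with $\HS{\rho(b)}{\rho(b')}$. Here $\rho$ denotes the representant in the defining representation. By sesquilinearity this suffices.

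The first step is to compute $\rho(b)$. By Theorem \ref{thm:def-rep}, each generator $h_k$ is represented by $X_k$ times a string of $Z_j$ with $j<k$. Hence $\rho(b)$ is, up to a sign, a Pauli string whose $X$-content (the set of sites carrying $X$ or $Y$) is exactly $\{a_1,\dots,a_n\}$. Indeed, $X$ factors appear only at the sites $a_i$, and the $Z$ factors are diagonal and cannot cancel them; they can only turn some $X$ into $Y$ up to a phase $\pm i$. So $\rho(b)$ is a unitary Pauli string times a unimodular phase.

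The second step uses the standard orthogonality of Pauli strings: $\frac{1}{2^M}\mathrm{Tr}(P^\dagger P')$ equals $1$ if $P=P'$ and $0$ otherwise, and it is unaffected by phases in modulus. Two cases follow.
\begin{itemize}
\item If $b\neq b'$, the $X$-contents of $\rho(b)$ and $\rho(b')$ differ. Then $\rho(b)^\dagger\rho(b')$ contains an off-diagonal factor on some site and is traceless.
\item If $b=b'$, then $\rho(b)^\dagger\rho(b)=\mathbb{1}$ by unitarity, and the normalized trace is $1$.
\end{itemize}
This gives $\HS{\rho(b)}{\rho(b')}=\delta_{b,b'}$.

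It is also worth noting the link to the state picture, since the statement is phrased for the correspondence. We have $\rho(b)\ket{\emptyset}=\varphi(b)$ by construction of the left-multiplication representation, because $b\cdot 1=b$. The vectors $\varphi(b)$ are orthonormal computational basis states. So both sides reproduce the same orthonormal system, and linearity finishes the argument for general elements.

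The main subtlety I expect is bookkeeping rather than depth: making sure the phases $\pm1,\pm i$ produced when $Z$ strings meet $X$ factors never affect the norm. They cannot, since only $|\text{phase}|=1$ enters. One should also check that the Hilbert--Schmidt product is taken with the adjoint, so that $\rho(b)^\dagger\rho(b)=\mathbb{1}$ holds even though $\rho(b)$ need not be Hermitian for non-commuting products.

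If the intended statement instead refers to the product on states $\varphi(b)$ directly, the claim is immediate from the orthonormality of computational basis states. The argument above shows the two notions agree.
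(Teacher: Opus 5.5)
Your proof is correct and follows essentially the same route as the paper's. Both reduce the claim to the orthonormality of Pauli strings under the normalized Hilbert--Schmidt product: distinct basis elements yield a nontrivial, hence traceless, Pauli string in $\rho(b)^\dagger\rho(b')$, while equal ones give the identity. Your $X$-content argument makes explicit a step the paper leaves implicit, namely why two distinct basis elements cannot be represented by the same Pauli string up to a phase.
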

\begin{proof}
  In the defining representation the scalar products are computed via  products of Pauli strings. Every Pauli string has
  Hilbert-Schmidt norm equal to unity.
Furthermore, if we compute the scalar product of two non-identical operators, then we end up with a non-trivial Pauli string.
  Every
  non-identical Pauli string has zero trace, and this proves the theorem.
\end{proof}

\subsection{The example of the Ising/XY algebra}

\label{sec:isingxy}

Now we discuss the simple and physically relevant example of the Ising/XY algebra. The more complicated example of the
FFD model will be
treated later in Section \ref{sec:ffd}.

\bigskip

First we consider the Ising/XY algebra with open boundary conditions. The non-zero matrix elements of the adjacency matrix are
\begin{equation}
    A_{j,j+1}=A_{j+1,j}=1,\qquad j=1,\dots,M-1.
\end{equation}
Direct inspection shows that if $M=2k$, then $\det(A)=1$. Therefore, the algebra is non-degenerate. In contrast, for
$M=2k+1$ we find $\det(A)=0$. There is one central basis element in the algebra, namely $h_1h_3\dots h_{M}$.

The defining representation of this algebra is given by operators acting on a spin chain with length $M$, given by
\begin{equation}
  h_j=Z_{j-1}X_j,\qquad \tilde h_j=X_jZ_{j+1},
\end{equation}
where it is understood that $Z_0=Z_{M+1}=1$. The representation leading to the Dzyaloshinskii–Moriya interaction term
(see eq. \eqref{XYH}) is obtained after a global rotation, see  \eqref{eq:XYrot}.

If $M$ is odd, then the same algebra can be represented as the terms of the quantum Ising Hamiltonian. One possible
choice is given by
\begin{equation}
  h_{2j-1}=Z_j,\qquad h_{2j}=X_{j}X_{j+1}.
\end{equation}
In this representation the central element becomes
\begin{equation}
  \ZZ=\prod_{j=1}^M Z_j.
\end{equation}
This operator is usually interpreted as the fermionic parity. 

This representation does not generate the full operator algebra of the spin chain with length $[M/2]$. It is easy to see
that for example the operator $X_1$ can not be generated as products of the $h_j$. However, including this
single operator we can indeed generate the full operator algebra.

\bigskip

In the case of periodic boundary conditions we have an additional edge in the graph, corresponding to adding
\begin{equation}
  A_{1,M}=A_{M,1}=1.
\end{equation}
to the adjacency matrix. The numerical value of the determinant depends on $M$ with a periodicity of four,
however, we always find even numbers. In accordance with this observation,  direct computation shows that there are
central elements in the
algebra.

If $M$ is even then the center is 4 dimensional and we can choose its generators as
\begin{equation}
  h_1h_3\dots h_{M-1},\qquad h_2h_4\dots h_{M}.
\end{equation}
If $M$ is odd then we find only one non-trivial central element, namely
\begin{equation}
  h_1h_2h_3\dots h_{M}.
\end{equation}

Let us finally consider the Ising spin chain representation of this algebra, with periodic boundary conditions. Then $M$
is even, the operators are represented by \eqref{IsingH}. We find that the central element $h_1h_3\dots h_{M-1}$ is
trivially mapped to the identity operator, whereas the other independent central element $h_2h_4\dots h_{M}$ is mapped
to $\ZZ$.

\section{Eigenstates}

In this Section we present our main results for the construction of the eigenstates of both $H_A$ and also the original
$H$. All results are discussed in a model independent way.
Many technical details depend on the concrete model;
in the case of the FFD model the details are discussed in Section \ref{sec:ffd}.

\subsection{Eigenstates of $H_A$}

\label{sec:eigenstates}

We consider models where we can embed two copies of the same algebra into the same representation. This means
that we have a model with two sets of generators given by $h_j$ and $\tilde h_j$.

It is our goal to derive eigenstates of the Hamiltonian
\begin{equation}
  \label{HAdef1}
  H_A=\sum_{j=1}^M   b_j(h_j-\tilde h_j)
\end{equation}
in a relatively simple way.

The usual construction of eigenstates in free fermionic
models uses a reference state, which is annihilated by one half of the fermionic eigenmodes. Afterwards states are created by
acting with the other half of the fermionic operators. This method can not  be used to find eigenstates of $H$, because
generally there is no reference state. This is true for the quantum Ising  chain, the anisotropic XY chain, the FFD
model, and many other models.

Our key insights is
that very often the anti-symmetric
combination $H_A$ given by \eqref{HAdef} does have a reference state $\refst$, which
is a product state
in the computational basis. The key relation is
\begin{equation}
  \label{reflocal}
  h_j\ket{\emptyset}=\tilde h_j \ket{\emptyset}, \qquad j=1,\dots,M,
\end{equation}
which can be established in multiple models.  If \eqref{reflocal} is satisfied, then
\begin{equation}
  H_A\ket{\emptyset}=0.
\end{equation}
It is crucial that the coupling constants in $H$ and $\tilde H$ are chosen to be the same.
However, the reference state does not need to be
translationally invariant: it can also have spatial modulations.

In the previous Section we showed, that if $H_A$ is given in the defining representation, then the reference state is
naturally given by
\begin{equation}
  \ket{\emptyset}=\ket{000\dots 0}.
\end{equation}
This holds true for both the XY model (after a trivial global rotation) and the FFD model.

Now we construct fermionic ladder operators that create the eigenstates of $H_A$. We have $H_A=H-\tilde H$, such that
$H$ and $\tilde H$ are free fermionic Hamiltonians that commute with each other. It is a natural idea to construct
fermionic operators for both Hamiltonians separately, using the formalism explained above in Section \ref{sec:fermion-intro}.

We use the formulas
\begin{equation}
  \begin{split}
  \label{Psidef2}
    \Psi_{\pm k}&=\frac{1}{\NN_k}T(\mp u_k) \chi T(\pm u_k)\\
   \tilde \Psi_{\pm k}&=\frac{1}{\NN_k}\tilde T(\mp u_k) \tilde \chi \tilde T(\pm u_k).\\
  \end{split}
\end{equation}
Here $T(u)$ and $\tilde T(u)$ are transfer matrices for the Hamiltonians $H$ and $\tilde H$, respectively.
The two Hamiltonians share the coupling constants, therefore the transfer matrices satisfy the same inversion relations,
and the roots $u_k$ of the polynomials $P(u^2)$ also coincide.

The formulas above use two edge operators $\chi$ and $\tilde \chi$. They satisfy the required commutation relations
within the two copies of the algebra: they commute with almost all generators, except those in the selected simplicial
clique.
Furthermore, we require the commutation relations
\begin{equation}
  \label{jochi}
  [\chi,\tilde \chi]=[\chi,\tilde h_k]=[\tilde\chi,h_k]=0.
\end{equation}
This ensures the commutation relations
\begin{equation}
  \label{HPsi2}
  [H,\tilde \Psi_k]=[\tilde H,\Psi_k]=[\Psi_k,\tilde \Psi_\ell]=0
\end{equation}
leading to a convenient set of relations
\begin{equation}
  \label{HPsi3}
  \begin{split}
    [H_A,\Psi_k]&=-2\eps_k\Psi_k, \quad   [H_A,\tilde \Psi_k]=2\eps_k\tilde\Psi_k,     \\
     [H_A,\Psi_{-k}]&=2\eps_k\Psi_{-k}, \quad   [H_A,\tilde \Psi_{-k}]=-2\eps_k\tilde\Psi_{-k}.
  \end{split}
\end{equation}
In certain cases it might happen that the two edge operators with the desired properties can be naturally found within a
given representation. Alternatively, they can always be constructed by using additional qubits. One possible way
to do this was given in Subsection \ref{subsec:edge} by formulas \eqref{edgeadd1}-\eqref{edgeadd2}. In this construction
a total number of two extra qubits is needed, corresponding to the addition of $\chi$ and $\tilde\chi$.

We note that in the Hermitian cases when the coupling constants $b_j$ are all real we get the Hermitian conjugate
relations
\begin{equation}
  \Psi_{-k}=\Psi_k^\dagger,\qquad
   \tilde \Psi_{-k}=\tilde\Psi_k^\dagger.
\end{equation}
 
Having collected all ingredients, now we can act with the fermionic operators to create excitations on top of the
reference state. However, there are crucial differences as opposed to the standard situation.

Now the formulas \eqref{HPsi3}  imply that we could act with both families of
fermionic operators, and for each mode $k$ we can both lower and also raise the energy by an amount $\eps_k$. Indeed,
the states $\Psi_k\refst$ and $\tilde\Psi_{k}^\dagger\refst$ have energy $-2\eps_k$, whereas  $\Psi_{k}^\dagger\refst$ and
$\tilde\Psi_{k}\refst$ have energy $2\eps_k$. At this stage there is no indication that any of these states would be a
null-vector, and later we prove rigorously that they have a finite norm. Thus we can act in both directions, by
using both families of fermionic operators. Later we will see that one family can be discarded: for example, it is
sufficient to act with the family $\Psi_{\pm k}$ only.

This situation described above is in contrast with the usual case in known Jordan-Wigner solvable models with a reference state.
There the {\it annihilation operators} annihilate the reference state, and the
{\it creation operators} create the excitations above the reference state.

The reason for this unusual behavior is as follows. In either $H$ or $\tilde H$ the eigenmode $k$ leads to a pair of
single-particle eigenvalues $\pm \eps_k$. Taking the anti-symmetric combination $H_A$ the possibilities for a particular
eigenmode become 4-fold, given by $\pm \eps_k\pm \eps_k=2\eps_k,0,0,-2\eps_k$. Thus for each mode $k$ we get a
total number of 3 possible numerical values,
and the zero value
comes with a degeneracy  of 2. This implies that the Hamiltonian $H_A$ has in fact two copies of a fermionic doublet for each
$k$. Then it is very natural that from a level with zero energy we can go both  up and down in energy.

Taking for example the family $\Psi_k$ it is convenient to choose the basis states $\Psi_k\refst$ and {$\Psi_{k}^\dag\Psi_k\refst$} for one doublet, and $\Psi_{k}^\dagger\refst$ and $\Psi_{k}\Psi_{k}^\dagger\refst$ for the other doublet. It
follows from the fermionic algebra that both pairs of states form a representation of a doublet, see
Fig. \ref{fig:levels}. The original reference state is simply a linear combination of the two selected states
$\Psi_{k}^\dagger\Psi_k\refst$ and $\Psi_{k}\Psi_{k}^\dagger\refst$ at the zero level.
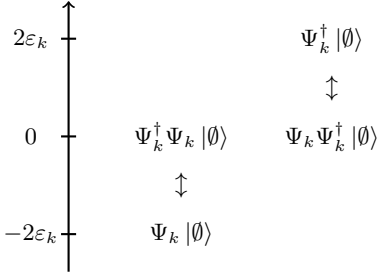
\begin{figure}[t]
  \centering

  \begin{tikzpicture}
    \node at (-1,0) {$\Psi_k^\dagger \Psi_k \ket{\emptyset}$};
    \node at (-1,-1.3) {$ \Psi_k \ket{\emptyset}$};
    \node at (-1,-0.65) {$\updownarrow$};
      \node at (1,0.65) {$\updownarrow$};
       \node at (1,0) {$\Psi_k \Psi_k^\dagger \ket{\emptyset}$};
       \node at (1,1.3) {$ \Psi_k^\dagger \ket{\emptyset}$};
       \draw[thick,->] (-2.5,-1.8) -- (-2.5,1.8);
       \draw[thick] (-2.6,1.3) -- (-2.4,1.3);
       \draw[thick] (-2.6,0) -- (-2.4,0);
       \draw[thick] (-2.6,-1.3) -- (-2.4,-1.3);
       \node at (-3,0) {0};
       \node at (-3,1.3) {$2\eps_k$};
         \node at (-3,-1.3) {$-2\eps_k$};
  \end{tikzpicture}

  \caption{The four states associated with the eigenmode with single-particle energy $\eps_k$. They split into two
    fermionic doublets, and the zero energy level is doubly occupied.}
  \label{fig:levels}
\end{figure}

The Hamiltonians $H$ and $\tilde H$ have typically $2^S$ different eigenvalues. If the coupling constants for $H$ and
$\tilde H$ would be allowed to
differ, then $H_A$ could have $4^S$ different eigenvalues. However, due to the reasons explained above, $H_A$ has a
total number of $3^S$ different eigenvalues. In this case the degeneracies of the different levels do differ, as opposed
to the case of $H$ that has homogeneous degeneracies.

Multi-particle states can be created by multiple action of the $\Psi_{k}$  operators. More concretely, the $n$-particle states
\begin{equation}
  \label{multip}
  \prod_{j=1}^n \Psi_{\sigma_j a_j}\refst
\end{equation}
with $a_j=1,\dots,S$, $j=1,\dots,n$, and $\sigma_j=\pm 1$
are eigenvectors of $H_A$ and they have eigenvalue
\begin{equation}
  E=-2\sum_{j=1}^n \sigma_j \eps_j.
\end{equation}
Further eigenstates with the same energy can be created by acting with the operator products
$\Psi_{\pm\ell}\Psi_{\mp\ell}$ for an index $\ell$ which is not present in the above product. In the case of the FFD
model we treat the completeness of this set of states in Section \ref{sec:allstates}.

The states \eqref{multip} are not normalized. However, we have the following:
\begin{restatable}{thm}{normthm}
  \label{thm:norm}
  The norm squared of the states \eqref{multip} is $2^{-n}$.
\end{restatable}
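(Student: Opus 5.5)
The plan is to turn the norm into a trace over the algebra generated by the $h_j$ and the edge operator $\chi$, and then evaluate that trace with the canonical anticommutation relations \eqref{Diracdef} alone. I assume real couplings, so that $\Psi_{-k}=\Psi_k^\dagger$. I also assume the mode indices $a_1,\dots,a_n$ are pairwise distinct; if two coincide with the same sign the state vanishes by $\Psi_k^2=0$, so the statement concerns distinct modes. I work in the defining representation, with $\chi$ embedded as in \eqref{edgeadd1}--\eqref{edgeadd2}, and take the reference state to be $\ket{0}_0\otimes\refst$.

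\emph{Step 1: the reference-state expectation is the normalized trace.} Let $\mathcal{B}$ be the algebra generated by the $h_k$ of \eqref{edgeadd2} together with $\chi=X_0$. Its basis elements are $\chi^{e}h_{a_1}\cdots h_{a_m}$. Every such element other than the identity carries at least one Pauli $X$, either $X_0$ or some $X_k$ with $k\ge1$. The $Z_0$ factors never appear without an accompanying $X_k$. Hence each non-identity basis element has vanishing expectation value in $\ket{0}_0\otimes\refst$, and
\begin{equation*}
  \bra{\emptyset}\mathcal{O}\ket{\emptyset}=\tau(\mathcal{O}),\qquad \mathcal{O}\in\mathcal{B},
\end{equation*}
where $\tau$ is the coefficient of the identity. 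This is the same mechanism as in Theorem~\ref{thm:normcons}. The functional $\tau$ coincides with the normalized matrix trace of $\mathcal{B}$ in its faithful left-regular representation. It is therefore a genuine normalized trace: positive, tracial, and satisfying $\tau(1)=1$.

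\emph{Step 2: reduce the norm to a product of occupation projectors.} The operators $\Psi_{\pm k}$ of \eqref{Psidef2} lie in $\mathcal{B}$, since they are built from $T(u)$ and $\chi$. Using $\Psi_{-k}=\Psi_k^\dagger$, the norm squared is $\tau(\Psi_{-\sigma_na_n}\cdots\Psi_{-\sigma_1a_1}\Psi_{\sigma_1a_1}\cdots\Psi_{\sigma_na_n})$. I would move the inner pair outwards step by step. Because the modes are distinct, each bilinear $P_j\equiv\Psi_{-\sigma_ja_j}\Psi_{\sigma_ja_j}$ commutes with every $\Psi_{\pm a_l}$ for $l\ne j$. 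This rewrites the expression as $\tau(P_1P_2\cdots P_n)$ with no extra signs. By \eqref{Diracdef}, each $P_j$ is an orthogonal projector, either $\Psi_{a_j}^\dagger\Psi_{a_j}$ or $\Psi_{a_j}\Psi_{a_j}^\dagger=1-\Psi_{a_j}^\dagger\Psi_{a_j}$, and the $P_j$ mutually commute.

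\emph{Step 3: evaluate the trace.} The $2S$ operators $\Psi_{\pm k}$ satisfy the CAR algebra of $S$ modes. Every finite-dimensional representation of this algebra is a direct sum of copies of the unique irreducible Fock representation on $(\complex^2)^{\otimes S}$. In the left-regular representation of $\mathcal{B}$ we therefore have $\mathcal{B}$'s carrier space $\cong(\complex^2)^{\otimes S}\otimes\complex^{d}$ for some multiplicity $d$. In this decomposition $P_j$ acts as a rank-one projector on the $a_j$-th tensor factor and as the identity elsewhere. The normalized trace thus factorizes, giving $\tau(P_1\cdots P_n)=\prod_j\tfrac12=2^{-n}$.

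An alternative for Step 3 avoids the structure theorem. The operators $\gamma_j=\Psi_{a_j}+\Psi_{-a_j}$ are Majoranas, and $2\Psi_{a_j}^\dagger\Psi_{a_j}-1=\pm i\gamma_j\gamma'_j$ with $\gamma'_j=i(\Psi_{a_j}-\Psi_{-a_j})$. Expanding the product of projectors, each nontrivial term is a product of distinct Majoranas, which has zero trace by cyclicity and anticommutation.

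The main obstacle is Step 1. One must make sure that the expectation value in the chosen reference state equals the normalized trace of a representation in which the CAR hold. This requires that $\Psi_{\pm k}$ are built only from the first copy together with $\chi$; the auxiliary qubit for $\chi$ must be included consistently so that the $Z_0$ factors in \eqref{edgeadd2} do not spoil the argument. For a concrete non-defining representation, such as the FFD model in \eqref{bulkrep1}, I would check the analogous property directly: only the identity component of any element of $\mathcal{B}$ survives in $\refst$.
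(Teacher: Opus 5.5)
Your proof is correct, and its skeleton matches the paper's. The paper also uses the operator--state correspondence (Theorem~\ref{thm:normcons}) to turn the norm into a normalized trace. It then applies the same reordering to reach $\frac{1}{2^M}\mathrm{Tr}\prod_j \Psi_{-\sigma_j k_j}\Psi_{\sigma_j k_j}$, a trace of commuting projectors.

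The difference is in how that trace is evaluated. The paper counts the dimension of the projector's image by importing from Fendley's work the fact that every eigenvalue of $H$ has uniform degeneracy $2^{M-S}$. Fixing $n$ of the $S$ signs then leaves a $2^{M-n}$-dimensional subspace. The paper also sketches a second argument: the norm is independent of the signs, and the sum over all $\sigma_j$ collapses to $1$ by \eqref{Diracdef}.

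You instead get the value from the canonical anticommutation relations and traciality alone, either through the structure theorem for finite-dimensional CAR representations or through the Majorana expansion. This is self-contained and needs no spectral input about $H$. Indeed, the uniform degeneracy the paper cites is itself a consequence of that structure theorem. Your argument works wherever the reference-state expectation is a trace on an algebra that contains the $\Psi_{\pm k}$.

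Your Step~1 also makes explicit a point the paper passes over. The paper's claim that the operator corresponding to the state is simply $\prod_j\Psi_{\sigma_j k_j}$ tacitly requires $\chi$ to lie in $\AAA$, or else an extended correspondence. Your auxiliary-qubit construction with $\ket{0}_0\otimes\refst$ handles this.

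One small correction to your setup. Distinctness of the $a_j$ is a genuine hypothesis of the theorem, not a consequence of $\Psi_k^2=0$. A coincidence with opposite signs does not vanish: for example $\|\Psi_{-k}\Psi_k\refst\|^2=\bra{\emptyset}\Psi_{-k}\Psi_k\refst=\tfrac12\neq 2^{-2}$.
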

\begin{proof}
{In order to prove the theorem it is enough to show that the norm of such states is independent from the signs
$\sigma_j$. Once this is shown, we can perform a summation of the norm squared over all signs $\sigma_j$. The summation
leads to the end result being 1 due to the canonical anti-commutation relations  \eqref{Diracdef}, and if every summand
is equal then the summation is equivalent to multiplication by $2^n$, thus proving the theorem.}

{Now we use the result of Theorem \ref{thm:normcons}, and instead of the norms of the states we compute the
Hilbert-Schmidt norms in operator space. The 
  operator corresponding to the state
  \begin{equation}
    \prod_{j=1}^n \Psi_{\sigma_j k_j}\refst
  \end{equation}
  is simply just $\prod_{j=1}^n \Psi_{\sigma_j k_j}$. We now consider the
  Hilbert-Schmidt norm and after some re-ordering we are faced with the quantity
  \begin{equation}
\frac{1}{2^M}     \text{Tr}  \prod_{j=1}^n (\Psi_{-\sigma_j k_j} \Psi_{\sigma_j k_j}).
  \end{equation}
This is a product of projectors. We remind that the energies are of the form \eqref{energyformula}. Now
every factor $\Psi_{-\sigma_j k_j} \Psi_{\sigma_j k_j}$ projects to the sector of the
Hilbert space where the sign of $\eps_{k_j}$ in the energy formula is $\sigma_j$. This implies
that the operator product above projects to a particular sector of the Hilbert space where a total number of $n$ signs
are fixed, but $S-n$ signs are arbitrary. In the case of a projector the trace simply just counts the dimensionality of
the image space. It was proven in \cite{ffd} that every eigenenergy has the degeneracy
$2^{M-S}$. This means that the image space has dimensionality $2^{S-n}2^{M-S}=2^{M-n}$. Combining this with the overall
normalization of the trace we obtain the statement of the theorem.}
\end{proof}

So far we acted only with one family of fermionic operators, and we obtained eigenstates for every possible level.
In concrete cases we can show that it is not necessary to involve the other fermionic family $\tilde \Psi_k$ in the
process of creating eigenstates. This happens because often we can establish the relation
\begin{equation}
  \label{psitildepsi1}
\Psi_{\pm k}\refst=\tilde\Psi_{\mp k}\refst,
\end{equation}
making the use of the family $\tilde \Psi_k$ unnecessary. The  requirements for proving
\eqref{psitildepsi1} depend on the model and the representation chosen, and it might not always be possible to find such
a simple and compact relation.
{However, if the frustration graph $G$ is even-hole- and claw-free,  and the edge operator can be embedded into the
original algebra $\AAA$, then \eqref{psitildepsi1} can be established. We will treat this in a follow up paper.}

So far we have not discussed the full degeneracy of the various levels. We have only claimed that we can use the
fermions to create at least one eigenstate for every possible energy level of $H_A$. In order to create all possible
eigenstates we might need to deal with a large non-abelian symmetry algebra, which is often present in such models. In
the case of the FFD model this algebra was constructed recently in \cite{eric-lorenzo-ffd-1}; we
discuss it in Section \ref{sec:allstates}. The symmetry algebra is not known more generally.

\subsection{Eigenstates of $H$}

Above we constructed the eigenstates of the anti-symmetric Hamiltonian $H_A$. The reader might wonder whether we could
also find the eigenvectors of $H$ and $\tilde H$, and whether some of the eigenvectors of $H_A$ that we found could be
eigenvectors of $H$. Unfortunately, the solution is not obvious, because in most cases
 the reference state $\refst$ is not an eigenvector of $H$. We remind that even though $H$, $\tilde H$ and $H_A$ all
 commute with
 each other, these operators have different degeneracies, therefore it is not guaranteed
that a certain eigenvector of $H_A$ is also an eigenvector of $H$ or $\tilde H$.

Despite these difficulties, we can construct the eigenvectors of $H$, although in a less convenient way. The key
idea is that if we take a
random vector and act with either a raising or a lowering operator {\it for every mode}, then we necessarily get an
eigenvector, unless we end up with a null vector. This can give at least one eigenvector for every level
of $H$. Further eigenstates can then be obtained by using the hidden non-abelian symmetries, that can be present
in such models  \cite{eric-lorenzo-ffd-1}.

Instead of choosing a random vector, we start this
procedure from the reference state. Our key result here is that doing this we never  end up at a null vector, therefore
this procedure in fact yields at least one eigenvector for every energy level of $H$.

We take formula \eqref{multip} and act with $n=S$ fermionic modes:
\begin{equation}
  \label{multip2}
  \prod_{j=1}^S \Psi_{\sigma_j j}\refst.
\end{equation}
According to Theorem \ref{thm:norm} this vector has norm squared $2^{-S}$. It follows from the fermionic algebra that
it is an eigenvector of $H$ with eigenvalue 
\begin{equation}
  E=-2\sum_{j=1}^S \sigma_j \eps_j.
\end{equation}

It might appear that for practical purposes it is not convenient to act with all the eigenmodes to obtain an
eigenstate, especially if we were to use the formula \eqref{Psidef}, which might be computationally ineffective.
However, in the next Subsection we show that in certain cases there are factorized formulas available
for the fermionic operators. Such factorizations can render the formula \eqref{multip2}  efficient. In particular,
eigenstates of $H$ can be created on a classical computer with the action of $\ordo(M^2)$ localized
linear operations.

\subsection{The factorized formulas}

\label{sec:factorized}

The formula \eqref{Psidef} is rather general: it  describes the fermionic operators in a
large class of models covered
by the graph theoretical conditions of \cite{fermions-behind-the-disguise,unified-graph-th}.
At the same time, for certain models we can find other representations too.

In the one dimensional models the transfer matrix \eqref{Tudef} can often be represented as a Matrix Product Operator
(MPO). If the MPO has bond dimension $\kappa$, then the fermionic operators of \eqref{Psidef} can also be expressed as
an MPO, but with bond dimension $\kappa^2$. Depending on the model it might be possible to lower this bond
dimension. For the case of the FFD model we summarize the MPO construction of \cite{ffd} in Appendix \ref{sec:MPOT}.

Alternatively, in certain models the transfer matrix can also be factorized into a product of localized operators. This
happens naturally in the case of open boundary conditions, where the previously mentioned MPO is naturally expressed as
an operator product. This was established for the FFD model in \cite{ffd}, and more generally it can be established for
the so-called {\it chordal graphs} \cite{adrian-unpublished}.

Now we restate that result using the original FFD model~\cite{fendley-fermions-in-disguise} as an example, and we use
it to find a product form for the fermionic operators. 

The factorized formula of the transfer matrix takes the form
\begin{equation}
T(u)=G_M^T(u)G_M(u).\label{eq:TGG}
\end{equation}
Here
\begin{align}
	G_{M}(u)& \equiv g_{1}(u)g_{2}(u)\ldots g_{M}(u), \nonumber \\
	G_{M}^{T}(u) &=g_{M}(u)g_{M-1}(u)\ldots g_{1}(u),\label{eq:GG}
\end{align}
where the factors $g_{m}(u)$ are localized operators given by
\begin{equation}
g_{m}(u)=\cos\frac{\phi_{m}}{2}+\sin\frac{\phi_{m}}{2}h_{m}\label{eq:gm}
\end{equation}
and $[\cdot]^{T}$ means the usual transpose operation, assuming also $h_k^T=h_k$. The angles $\phi_m$ depend on the
spectral parameter and the
coupling strengths $b_m$ of the Hamiltonian. The relation is model-dependent, and it will be studied separately in later
Sections. In the case of more general chordal graphs, a special ordering of the $h_j$ or $g_j$ is needed
\cite{adrian-unpublished}.

From (\ref{eq:GG}) and \eqref{Psidef} it is already clear that the fermionic operators have a product form:
each $\Psi_{\pm k}$  is a product of $4M$ localized $g_m$ operators together with the insertion of the edge operator. In
certain cases we can simplify this picture even further. The edge operator commutes with most of the generators $h_k$,
therefore it also commutes with most of the $g_k$.
Therefore, we may present the fermionic operators in a more economic form as
\begin{align}
\label{eq:PsiGG}
  \Psi_{k} \sim
  g_{M}^{-}g_{M-1}^{-}\ldots g_{2}^{-}g_{1}^{-}\chi  g_{1}^{+}g_{2}^{+}\ldots g_{M-1}^{+}g_{M}^{+},
\end{align}
where we used $g_{m}^{\pm}\equiv g_{m}(\pm u_k)$
for brevity (cf. Appendix B.2 of \cite{ffd}). This formula gives $\Psi_k$ as a product of $2M$ localized operators, with
the insertion of the edge operator. 

Such a formula can be established in the Ising and XY models, and also the FFD model.

\section{The FFD model}

\label{sec:ffd}

In this Section we study the FFD model in detail. First we discuss the structure of the abstract algebra, and establish
statements about the completeness of the defining representation. Afterwards we turn our attention to the eigenstates of
$H_A$.

As an addition to all of this, in Appendix \ref{sec:periodic} we also present a few selected eigenstates of $H_A$ in the
case of periodic boundary conditions.

\subsection{Model and transfer matrix}

The FFD algebra is defined in 
eq. \eqref{ffdalg}, and the corresponding  frustration graph is depicted on Fig \ref{fig:ffd}. We restrict ourselves to open boundary
conditions, where there is no periodicity in the commutation relations or in the graph. The abstract  algebra will be
denoted as $\ffda$.

We also consider the second copy of the same algebra, denoted by $\tffda$, which is generated by the elements $\tilde
h_j$, with $j=1, 2,\dots, M$. In the following it will be convenient to consider the product algebra
$\ffda\otimes\tffda$.

In this Section we will use the defining  representation of the algebra, given
by \eqref{FFDdefrep}.

We first discuss the particular properties of the transfer matrix.
We denote the length of the spin chain explicitly, so that $T_M(u)$ is the transfer
matrix associated to the chain with length $M$.
There is a simple recursion relation:
\begin{equation}
  \label{Trecurs}
  T_M(u)=T_{M-1}(u)-u {\mathtt H}_M T_{M-3}(u)\,,
\end{equation}
with the initial conditions $T_M(u)=1$ for $M\le 0$.

Alternatively,  $T_M(u)$ can be given via a Matrix Product Operator, see \cite{ffd} and Appendix
\ref{sec:MPOT}. There is also a factorized form \cite{ffd}, which can be used to construct quantum circuits
\cite{sajat-floquet}, but we do not use it here.

The transfer matrix satisfies the inversion relation \eqref{eq:inversion}, with a polynomial $P_M(x)$ that
is given by a recursion relation analogous to \eqref{Trecurs}
\begin{equation}
  \label{Precursion}
  P_M(x)=P_{M-1}(x)-x b_M^2 P_{M-3}(x)\,.
\end{equation}
with the initial condition $P_M(x)=1$ for $M\le 0$.

\subsection{The algebra}

Now we discuss the abstract FFD algebra and its defining representation.

The abstract FFD algebras $\ffda$ and $\tffda$ have dimension $2^M$, therefore the abstract tensor product
$\ffda\otimes\tffda$ has dimension $4^M$. This coincides with the dimensionality of the full operator algebra of the
spin chain with length $M$. Therefore, there is a chance that the defining representation of $\ffda\otimes\tffda$ covers
the full operator algebra. However, this happens only for particular values of $M$. Now we study the problem using the
tools introduced in Section \ref{sec:graph-clifford}.
Elements of such computations can be found in
\cite{sajat-floquet,eric-lorenzo-ffd-1},
but the complete exploration of this issue appears to be new.

The non-zero elements of the adjacency matrix are
\begin{equation}
  \begin{split}
    A_{j,j+1}&=A_{j+1,j}=1,\qquad j=1,\dots,M-1\\
    A_{j,j+2}&=A_{j+2,j}=1,\qquad j=1,\dots,M-2.
  \end{split}
\end{equation}

The determinant depends on the remainder of $M$ modulo 6. The values of the determinant are listed in Table
\ref{tab:ffd}. That Table also lists the number of independent non-trivial central elements $n_c$ in the algebra. We do
not prove the values for the determinants, but the values for $n_c$ are proven rigorously.

\begin{table}[h!]
  \centering
  \begin{tabular}{|c||c|c|c|c|c|c|}
    \hline
    $M$    &$6k$ & $6k+1$ & $6k+2$ &$6k+3$&$6k+4$&$6k+5$\\
    \hline
    $\det(A)$ & $2k+1$ & 0 & $-(2k+1)$ & $2k+2$ & $0$ & $-(2k+2)$ \\
    \hline
    $n_c$ & 0 & 1 & 0 &1 &2 &1 \\
    \hline
  \end{tabular}
  \caption{Structural properties of the FFD algebra: The determinant of the adjacency matrix, and the number of
    independent non-trivial central elements.}
  \label{tab:ffd}
\end{table}

\begin{thm}
  The FFD algebra with $M$ generators has $n_c$ number of independent non-trivial central elements, where $n_c$ is given
  in Table 
  \ref{tab:ffd}.
  \label{thm:ffd}
\end{thm}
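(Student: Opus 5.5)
The plan is to turn the count of central elements into a kernel computation over $\Zk$, and then solve that kernel explicitly with a linear recurrence. As in the proof of the first theorem of Section~\ref{sec:graph-clifford}, a basis element $h_{a_1}\cdots h_{a_n}$ is central exactly when its commutation pattern vanishes. That pattern is the $\Zk$-sum of the rows $a_1,\dots,a_n$ of $A$. So central basis elements correspond to vectors $v\in\Zk^M$ with $Av=0$ over $\Zk$. Products of central elements correspond to sums of such vectors, so $n_c=\dim_{\Zk}\ker A$, and we only need this kernel for the FFD adjacency matrix.

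Row $j$ of $Av=0$ reads
\begin{equation*}
v_{j-2}+v_{j-1}+v_{j+1}+v_{j+2}=0 ,
\end{equation*}
where we set $v_{-1}=v_0=0$ and, for the moment, treat $v_{M+1},v_{M+2}$ as extra unknowns. Row $j$ then fixes $v_{j+2}$ from earlier entries. Hence every solution of rows $1,\dots,M$ is the restriction of a unique sequence with $v_{-1}=v_0=0$ and free initial data $(v_1,v_2)$. Such a sequence solves the original system exactly when $v_{M+1}=v_{M+2}=0$, since rows $M-1$ and $M$ must not involve outside entries. Therefore $\ker A$ is isomorphic to the subspace of $(\alpha,\beta)\in\Zk^2$ for which $\alpha a+\beta b$ vanishes at positions $M+1$ and $M+2$. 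Here $a$ and $b$ are the sequences with initial data $(1,0)$ and $(0,1)$.

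Next we compute $a$ and $b$ and check them against the recurrence $v_{k+2}=v_{k-2}+v_{k-1}+v_{k+1}$. Its characteristic polynomial factors as $(x+1)^2(x^2+x+1)$ over $\Zk$, so the sequences are periodic with period dividing $6$. Iterating a few steps gives:
\begin{itemize}
\item $a_k=1$ iff $k\equiv1 \pmod 3$, which corresponds to the central candidate $h_1h_4h_7\cdots$;
\item $b_k=1$ iff $k\equiv 2,3,4\pmod 6$, which corresponds to the candidate $h_2h_3h_4h_8h_9h_{10}\cdots$.
\end{itemize}
Both claims are verified by direct substitution into the recurrence over one period.

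Finally we form the $2\times2$ matrix $\begin{pmatrix} a_{M+1}& b_{M+1}\\ a_{M+2}& b_{M+2}\end{pmatrix}$ over $\Zk$, with entries depending only on $M \bmod 6$. Its nullity is $n_c$. Reading off the six cases gives:
\begin{itemize}
\item $M\equiv0$: the identity matrix, nullity $0$;
\item $M\equiv1$: rows $(0,1),(0,1)$, nullity $1$;
\item $M\equiv2$: rows $(0,1),(1,1)$, nullity $0$;
\item $M\equiv3$: rows $(1,1),(0,0)$, nullity $1$;
\item $M\equiv4$: the zero matrix, nullity $2$;
\item $M\equiv5$: rows $(0,0),(1,0)$, nullity $1$.
\end{itemize}
This reproduces Table~\ref{tab:ffd} and also yields explicit central elements.

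The only step needing care is the reduction to the initial data $(v_1,v_2)$ together with the boundary conditions $v_{M+1}=v_{M+2}=0$, in particular the edge rows $j=1,2,M-1,M$. I expect no genuine obstacle beyond bookkeeping of these residues.
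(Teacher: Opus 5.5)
Your argument is correct and is essentially the paper's proof: both compute the $\Zk$-kernel of $A$ by propagating a row dependency from the left edge, where it is fixed by its first two entries, and then checking when the pattern closes at the right edge. The paper's cases ``$a_1=1$ with $v_3$'', ``$a_1=1$ without $v_3$'' and ``$a_1>1$'' are exactly your initial data $(1,1)$, $(1,0)$, $(0,1)$, i.e.\ the patterns $a+b$, $a$, $b$. Your recurrence plus $2\times2$ boundary matrix is a tidier packaging: it gives $n_c$ directly as a nullity, and it makes explicit the $M=6k+4$ case, where $a$, $b$ and $a+b$ all close, which the paper's text leaves implicit.
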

\begin{proof}
We compute the rank of the adjacency matrix over $\Zk$. We have a Toeplitz matrix:
\begin{equation}
A=  \begin{pmatrix}
    0 & 1 & 1 & 0& 0& 0 & 0&  \dots\\
    1 & 0 & 1 & 1 & 0&  0 &0&  \\
    1& 1 & 0 & 1 & 1 & 0 &0 & \\
    0& 1 & 1 & 0 & 1 & 1 & 0 &\\
 0&      0& 1 & 1 & 0 & 1 & 1  &\\
\vdots & & & & & & &\ddots \\
  \end{pmatrix}
\end{equation}
This matrix is not of full rank, if there are some rows that are linearly dependent. Working over $\Zk$ this means that
the sum of certain vectors is a null vector over $\Zk$. If $v_j$ denote the rows of the matrix, we would have at least
one relation like
\begin{equation}
  v_{a_1}+v_{a_2}+\dots+v_{a_n}=0\quad \text{mod } 2,
\end{equation}
where we can assume that $a_1<a_2<\dots<a_n$.

It is easy to see that $a_1\le 3$ and $a_n\ge M-2$. If this were not the case, then the sum of the vectors would have
non-vanishing elements at the left-most position of $v_{a_1}$ and the right most position of $v_{a_n}$. So we need to
investigate the possibilities of having a linear relation like above with $a_1=1, 2, 3$. We do it case by case.

If $a_1=1$, then either $v_3$ or $v_4$ has to be in the sum, based on column 2, but both can not be in the sum
simultaneously. If $v_3$ is in the sum, then $v_2$ has to be there too, based on column 1. In this case we get the
partial sum
\begin{equation}
  v_1+v_2+v_3=(0,0,0,0,1,0,0,\dots,0).
\end{equation}
It follows that $v_4$, $v_5$ and $v_6$ can not be in the sum, but $v_7$ has to be there. Continuing these steps, we can
eventually show that the pattern has to be
\begin{equation}
  \label{pat2}
  (v_1+v_2+v_3)+(v_7+v_8+v_9)+\dots
\end{equation}
and the pattern can be completed if $M=6k+3$.

The next case is when $a_1=1$, so $v_1$ is in the sum, but $v_3$ is not, therefore $v_4$ is there. Then the first
partial sum we can compute is
\begin{equation}
  v_1+v_4=(0,0,0,0,1,1,0,\dots,0).
\end{equation}
This implies that $v_5$ and $v_6$ can not be in the sum, but $v_7$ needs to be there. Eventually we can complete the
pattern, to find the relation
\begin{equation}
  \label{pat1}
  v_1+v_4+v_7+\dots=0,
\end{equation}
which can be completed if $M=3k+1$.

The next case is when $a_1>1$. We argued that $a_1=2$ or $a_1=3$, but based on column 1 both $v_2$ and $v_3$ need to be
present in the sum. Based on column 2 also $v_4$ needs to be present. Then we get the partial sum
\begin{equation}
  v_2+v_3+v_4=(0,0,0,0,0,1,0,\dots,0).
\end{equation}
Based on columns 3, 4 and 5 we see that $v_5$, $v_6$, $v_7$ can not be present, but later $v_8$ needs to be
present. Continuing this line of reasoning we see the emergence of the pattern
\begin{equation}
  \label{pat3}
   (v_2+v_3+v_4)+(v_8+v_9+v_{10})+\dots
\end{equation}
The pattern can be completed if $M=6k+5$.
\end{proof}

This Theorem and the derivations in Section \ref{sec:graph-clifford} imply that 
for $M=6k$ and $M=6k+2$ the defining representation of the product of the two FFD algebras $\ffda\otimes\tffda$
coincides with the full operator algebra of the spin chain with length $M$.  For other values of $M$ the
center of the FFD algebra is non-trivial, and the defining representation does not cover the full operator algebra.

For the values $M=6k+j$ with $j=1,3,4,5$ we list the non-trivial central elements, and we also show that the defining
representation of $\ffda\otimes\tffda$ is not faithful.

For $M=3k+1$ we find one non-trivial element in the center of the algebras, corresponding to the pattern \eqref{pat1}.
In the defining representation this element
belongs indeed to both algebras; we find the relation:
\begin{equation}
  \label{relx3}
  \begin{split}
    &  h_1h_4h_7\dots h_{M}= \tilde h_1\tilde h_4\tilde h_7\dots \tilde h_{M}\\
  &  =X_1Z_2Z_3X_4Z_5Z_6\dots Z_{M-2}Z_{M-1}X_M.
  \end{split}
\end{equation}
Direct computation confirms, that the corresponding element  is central in $\ffda$ and $\tffda$.

Similarly, for $M=6k+3$ the pattern \eqref{pat2} gives the central elements 
\begin{equation}
  \label{k3id}
  \begin{split}
&      (h_1h_2h_3) (h_7h_8h_9)\dots (h_{M-2}h_{M-1}h_M)=\\
      &=   (\tilde h_1\tilde h_2\tilde h_3) (\tilde h_7\tilde h_8\tilde h_9)\dots (\tilde h_{M-2}\tilde h_{M-1}\tilde h_M).
  \end{split}
\end{equation}

For $M=6k+5$ the pattern \eqref{pat3} gives the central elements
\begin{equation}
  \label{k5id}
  \begin{split}
&     (h_2h_3h_4) (h_8h_9h_{10})\dots (h_{M-3}h_{M-2}h_{M-1})=\\
&  (\tilde h_2\tilde h_3\tilde h_4) (\tilde h_8\tilde h_9\tilde h_{10})\dots (\tilde h_{M-3}\tilde h_{M-2}\tilde h_{M-1}).
  \end{split}
\end{equation}

For $M=6k+4$ we have the central elements and also another independent relation, which comes from
other central elements. We can choose for example
\begin{equation}
  \begin{split}
&  (h_2h_3h_4) (h_8h_9h_{10})\dots (h_{M-2}h_{M-1}h_{M})=\\
&  (\tilde h_2\tilde h_3\tilde h_4) (\tilde h_8\tilde h_9\tilde h_{10})\dots (\tilde h_{M-2}\tilde h_{M-1}\tilde h_{M}).
  \end{split}
\end{equation}

In these cases the defining representation can not cover the full operator algebra, because we have a coincidence
between representants of an element from $\ffda$ and $\tffda$.

\subsection{The edge operator}

\label{sec:edge}

It is our goal to choose the edge operator within the FFD algebra. More precisely, we intend to find the two copies of
the edge operators $\chi$ and $\tilde\chi$ in $\ffda$ and $\tffda$.
This would guarantee the relations \eqref{jochi},
which are needed for the mutual commutativity \eqref{HPsi2}, that eventually guarantees the raising and lowering
relations \eqref{HPsi3}.

We choose the simplicial clique as the set with the single element $h_1$, therefore the edge operator has to satisfy
\begin{equation}
  \{\chi,h_1\}=0,\qquad [\chi,h_k]=0,\quad k>1.
\end{equation}

Theorem \ref{thm:ffd} implies that in the special cases $M=6k$, $M=6k+2$
we can choose $\chi$ to lie within $\ffda$. The concrete expressions we find are as follows.

For $M=6k$ we can choose
\begin{equation}\label{eq:chi6k}
	\chi=i^k (h_2h_3h_4)(h_8h_9h_{10})\cdots (h_{6k-4}h_{6k-3}h_{6k-2}).
   \end{equation}
For $M=6k+2$ we can choose
\begin{equation}
  \chi
    =h_1h_4h_7\cdots h_{6k+1}.
   \end{equation}

Interestingly, there is a solution also for $M=6k+5$, despite the algebra being degenerate. We find that the edge
operator can be chosen as
\begin{equation}
  \chi
  =(ih_1h_2h_3)(ih_7h_8h_{9})\cdots (ih_{6k+1}h_{6k+2}h_{6k+3}).
\end{equation}

In the remaining cases $M=6k+1$, $M=6k+3$, $M=6k+4$ it can be proven that the edge operator can not be chosen within
$\ffda$. The proof is similar to the strategy in the case of Theorem \ref{thm:ffd}, but applied to a concrete case: One
needs to check that the commutation pattern of the edge operator is not in the linear span of the rows of the adjacency
matrix.

\subsection{Eigenstates of $H_A$}

Now we construct the eigenstates of $H_A$. We limit ourselves to the cases when $\chi$ exists within the FFD algebra.

First we construct the second copy of the fermionic operators, using the generators of $\tffda$.
They will be denoted as $\tilde \Psi_k$. It follows from the general arguments in Section \ref{sec:eigenstates} that 
\begin{equation}
  [\Psi_k,\tilde \Psi_l]=0,
\end{equation}
which also implies the desired cross-commutation \eqref{HPsi2}.

The fermions $\tilde \Psi_k$ are not to be confused with the auxiliary fermions of \cite{eric-lorenzo-ffd-1}, which are responsible
for symmetry operations {\it within} the FFD Hamiltonian. The symmetry operations will be discussed below in the next
Subsection. 

Once all of these relations are established, we can act with the fermionic operators on the reference state, to create
the eigenstates of $H_A$. The spatial structure of the states we get this way will be studied in more detail in a
follow-up paper.

We put forward the following theorem:
\begin{thm}\label{thm:Psi-tildePsi-ket}
If we build the two families of fermionic operators using the same formulas, with the only difference that every $h_j$
is replaced by $\tilde h_j$, and if the edge operators $\chi$ and $\tilde\chi$ can be embedded into the algebra and they
are also built using the same rule, then the following holds:
  \begin{equation}\label{eq:Psi-tildePsi-ket}
    \Psi_{\pm k}\refst=\tilde\Psi_{\mp k}\refst
    \,,
  \end{equation}
\end{thm}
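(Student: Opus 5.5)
The plan is to work entirely through the operator--state correspondence $\varphi$ of Section~\ref{sec:opstate}. In the defining representation, $h_j$ acts as left multiplication and $\tilde h_j$ acts as right multiplication by the same abstract generator, with $\refst=\varphi(1)$. More generally, for any abstract element $\ordo\in\AAA$, write $L(\ordo)$ and $R(\ordo)$ for the representants of left and right multiplication. Then $L(\ordo)\refst=\varphi(\ordo)=R(\ordo)\refst$, and $R(\ordo_1)R(\ordo_2)=R(\ordo_2\ordo_1)$, so the right action reverses the order of products.

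The hypothesis says that $\chi$ and $\tilde\chi$ are built by the same rule from the generators. This means there is an abstract $x\in\AAA$, written as a polynomial in the $h_j$, with $\chi=L(x)$ and $\tilde\chi$ obtained by substituting $\tilde h_j=R(h_j)$ into the same word, in the same order. The same holds for $T(u)$ and $\tilde T(u)$. Substituting the generators into a word in the same order gives the anti-homomorphic image: $\tilde\chi=R(x^{T})$ and $\tilde T(u)=R(t(u)^{T})$. Here $t(u)$ is the abstract transfer matrix and ${}^T$ is the anti-automorphism of $\AAA$ that reverses products and fixes each $h_j$, the same transpose as in \eqref{eq:TGG}. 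Its existence is immediate because the defining relations are invariant under reversal.

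Two facts about the transpose are then needed. First, $t(u)^T=t(u)$. This holds because every product in $Q_\alpha$ in \eqref{Qalpha} consists of mutually commuting generators, so reversing its order changes nothing. Second, $x^T=\pm x$ for the chosen edge operators. I expect this to be the main point requiring care. The transpose of a basis element $h_{a_1}\cdots h_{a_n}$ equals $(-1)^{e}$ times itself, where $e$ is the number of anticommuting pairs within it. The reverse of an operator squaring to one is $\pm$ itself, but that sign must be pinned down. Since $\chi^2=1$ and $(x^T)^2=(x^2)^T=1$, a sign is forced only up to this ambiguity. Therefore I would check the sign directly for the explicit expressions in Section~\ref{sec:edge}, for example $h_1h_4h_7\cdots$, whose factors mutually commute, so the sign is $+$. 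If the sign turns out to be $-$, the identity holds up to that overall sign, which can be absorbed into the convention for $\tilde\chi$.

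Given these facts, the computation is short. Write $\Psi_{\pm k}=\NN_k^{-1}L\bigl(t(\mp u_k)\,x\,t(\pm u_k)\bigr)$. Applying it to $\refst$ gives
\begin{equation}
\Psi_{\pm k}\refst=\NN_k^{-1}\varphi\bigl(t(\mp u_k)\,x\,t(\pm u_k)\bigr).
\end{equation}
On the other side,
\begin{equation}
\tilde\Psi_{\mp k}=\NN_k^{-1}R\bigl(t(\mp u_k)\bigr)R(x)R\bigl(t(\pm u_k)\bigr)=\NN_k^{-1}R\bigl(t(\pm u_k)\,x\,t(\mp u_k)\bigr)^{\text{rev}}.
\end{equation}
Unwinding the order reversal of $R$, this equals $\NN_k^{-1}R\bigl(t(\mp u_k)\,x\,t(\pm u_k)\bigr)$. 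Here I use $\tilde T=R(t^T)=R(t)$ and $\tilde\chi=R(x)$, and the same normalization $\NN_k$, since the couplings and hence the roots $u_k$ coincide. Acting on $\refst=\varphi(1)$ gives $\varphi(t(\mp u_k)\,x\,t(\pm u_k))$, which proves \eqref{eq:Psi-tildePsi-ket}.

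The exchange $\pm\to\mp$ arises precisely from the order reversal of right multiplication. The only nontrivial input is the transpose invariance of $t(u)$ and of $x$.
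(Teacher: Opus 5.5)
Your argument is sound, and it takes a different route from the paper, which gives no proof here: it defers the proof to its second paper and only hints that it holds for claw-free, even-hole-free graphs with an embeddable edge operator. Your route uses only three facts: $\tilde h_j$ acts as right multiplication in the defining representation, $R(a)R(b)=R(ba)$, and every term of $Q_\alpha$ is a product of mutually commuting generators. From these you get the identity $\tilde T(u)\,\tilde\chi\,\tilde T(v)\refst=\sigma\,T(v)\,\chi\,T(u)\refst$ for arbitrary $u,v$, where $x^T=\sigma x$. No graph-theoretic input and no property of the roots $u_k$ is needed, and the exchange $\pm\to\mp$ comes exactly from the order reversal, as you say. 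Two points need correcting.

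First, your display for $\tilde\Psi_{\mp k}$ has the spectral parameters in the wrong order. By definition, in one step,
$\tilde\Psi_{\mp k}=\NN_k^{-1}\tilde T(\pm u_k)\tilde\chi\tilde T(\mp u_k)=\NN_k^{-1}R\bigl(t(\pm u_k)\bigr)R(x^T)R\bigl(t(\mp u_k)\bigr)=\NN_k^{-1}R\bigl(t(\mp u_k)\,x^T\,t(\pm u_k)\bigr)$.
What you wrote, $R(t(\mp u_k))R(x)R(t(\pm u_k))$, is $\NN_k\tilde\Psi_{\pm k}$ and equals $R(t(\pm u_k)\,x\,t(\mp u_k))$. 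Your ``unwinding'' step then reverses the order a second time. The two slips cancel, but the line is wrong as written.

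Second, the sign cannot be left open. The transpose is a well-defined anti-automorphism, so $\sigma$ is intrinsic to the element $x$ and cannot be changed by rewriting the word. Also, $x^T=\pm x$ holds because $x$ is a scalar times a single monomial, not because $x^2=1$.

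Completing your check: $h_1h_4h_7\cdots$ indeed gives $\sigma=+1$. However, in each block $h_jh_{j+1}h_{j+2}$ the three generators pairwise anticommute, so the reversed block is minus itself, while distinct blocks commute. Hence \eqref{eq:chi6k} has $\sigma=(-1)^k$, and the $M=6k+5$ choice has $\sigma=(-1)^{k+1}$.

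Concretely, for $M=6$ the representation \eqref{FFDdefrep} gives $h_2h_3h_4\refst=\ket{011100}$ but $\tilde h_2\tilde h_3\tilde h_4\refst=-\ket{011100}$. Since the center is trivial for $M=6k$, the edge operator (at either end of the chain) is unique up to a scalar, so no other choice inside $\ffda$ avoids this.

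Your argument, completed, therefore proves $\tilde\Psi_{\mp k}\refst=\sigma\,\Psi_{\pm k}\refst$, with both sides nonzero by Theorem~\ref{thm:norm}. If $\tilde\chi$ is obtained by literally substituting $h_j\to\tilde h_j$, the statement as worded holds only when $\sigma=+1$ (e.g.\ $M=6k+2$). Otherwise one must take $\tilde\chi:=R(x)$, i.e.\ substitute into the reversed word. Your closing remark about absorbing the sign is the right fix, but it amends the hypothesis ``built using the same rule'' rather than proving the statement as worded, and you should say so explicitly.
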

The theorem will be  proven in a more general form in the second paper of this series.

This theorem implies that it suffices to act with one family of fermionic operators, to obtain at least one state per level.
Additional symmetry transformations can be applied afterwards.

\subsection{Symmetry operations}

\label{sec:allstates}

The work \cite{eric-lorenzo-ffd-1} derived the full symmetry algebra for the Hamiltonian $H$, using the same
representation as in this paper. Building on the results of \cite{eric-lorenzo-ffd-1} we claim that now we can obtain every
eigenvector of $H_A$ using the standard fermions $\Psi_k$ and one family of certain auxiliary fermions, introduced in
\cite{eric-lorenzo-ffd-1}. 

The paper \cite{eric-lorenzo-ffd-1} showed that every eigenvector of $H$ can be obtained in three steps. First one needs
an initial eigenvector. Second, one can set the desired eigenvalue by acting with desired raising or lowering
operators. Third, one can act with the large symmetry algebra of the model. In \cite{eric-lorenzo-ffd-1}  this symmetry
algebra was factorized, one product given by the algebra of all those operators that commute with the $h_k$ and the edge
operator, the other product given by an algebra of certain auxiliary fermionic operators, denoted as $\Psi'_m$.

We adapt the results of \cite{eric-lorenzo-ffd-1}  to our situation: we present a slightly modified argument for the
completeness of the spectrum of $H_A$.

First we construct the auxiliary fermions $\Psi'_m$ as presented in \cite{eric-lorenzo-ffd-1}. It was shown there that
this is a family of $S'=\lfloor(M+2)/6 \rfloor$ fermions, which commute with the Hamiltonian. These fermions are part of
the symmetry algebra of the model.

In \cite{eric-lorenzo-ffd-1}  an important
ingredient for the introduction of the auxiliary fermions  was  another edge operator denoted by $\chi'$  (Section V.A there). This
Hermitian operator also squares to the identity, it anti-commutes with $h_2$ and commutes with every other $h_k$. We
show that in the cases $M=6k$, $M=6k+2$ and $M=6k+5$ the operator $\chi'$ can also be embedded
into the FFD algebra. This implies that the auxiliary fermions can be constructed manifestly within the FFD algebra
$\ffda$.

For $M=6k$ we can choose
\begin{equation}\label{eq:chiv6k}
     \chi'=i^k (h_4h_5h_6)(h_{10}h_{11}h_{12})\cdots (h_{6k-2}h_{6k-1}h_{6k}),
 \end{equation}
for $M=6k+2$
\begin{equation}
   \chi'=i^k h_1(h_5h_6h_7)(h_{11}h_{12}h_{13})\cdots (h_{6k-1}h_{6k}h_{6k+1}),
 \end{equation}
 and for $M=6k+5$ we can choose
\begin{equation}
    \chi'=(ih_2h_3h_4)(ih_8h_9h_{10})\cdots (ih_{6k+2}h_{6k+3}h_{6k+4}).
 \end{equation}

We note a technical detail, which is different in our choice for $\chi'$.
In \cite{eric-lorenzo-ffd-1} the two operators were chosen such that $[\chi,\chi']=0 $, and this eventually lead to the
commutation relation $[\Psi_k, \Psi'_{\ell}]=0$ for all $k,\ell$. We find the same relations for $M=6k+2$. However, for
$M=6k$ our choice satisfies $\{\chi, \chi' \} = 0$, which eventually implies
$\{ \Psi_k, \Psi'_{\ell}\}=0$ for all $k,\ell$.

We can also consider the Hamiltonian $\tilde H$ and construct the auxiliary fermions $\tilde \Psi'$ by adapting the
results of \cite{eric-lorenzo-ffd-1}. All steps need to be repeated, with the introduction of the fourth edge operator
$\tilde\chi'$, which is a member of $\tffda$. With this we constructed four families of fermionic operators, $\Psi_k$
and $\tilde \Psi_k$ governing the spectrum of  
$H$ and $\tilde H$ separately, and $\Psi_k'$ and $\tilde \Psi_k'$ describing the symmetries of these Hamiltonians within
the $\ffda$ and $\tffda$ algebras, respectively. For a summary of the different fermionic operators see Figure \ref{tab:fermions}.

\begin{figure}[t]
  \centering
  \begin{tikzpicture}
    \node at (-2,0.75) {In $\ffda$:};
      \node at (2,0.75) {In $\tffda$:};
    \node at (-3,0) {  $\Psi_k$ };
    \node at (-1,0) {  $\Psi'_k$ };
    \node at (1,0) {  $\tilde\Psi_k$ };
    \node at (3,0) {  $\tilde\Psi'_k$ };
    \node[align=center] at (-3,-0.75) {spectrum \\ generating};
    \node[align=center] at (1,-0.75) {spectrum \\ generating};
    \node[align=center] at (-1,-0.75) {auxiliary};
    \node[align=center] at (3,-0.75) {auxiliary};
  \end{tikzpicture}

  \caption{Short summary of the notations for the different families of fermions. In the ``spectrum generating'' families
    the positive indices take values $k=1\dots S$, whereas for the auxiliary fermions we have $k=1\dots S'$. }
  \label{tab:fermions}
\end{figure}

We present a new conjecture: We claim that for $M=6k$ and $M=6k+2$ all eigenstates of $H_A$ can be generated by only two
families of fermions: the spectrum generating $\Psi_{\pm k}$, with $k=1,\dots,S$ and the auxiliary $\Psi_{\pm \ell}'$,
with $\ell=1,\dots,S'$, and no other symmetry operators are needed.

In order to formalize the conjecture, we introduce some notations.
Let us choose integers  $\kappa_j$, $\kappa_{-j}$, $j=1,\dots,S$, and also $\kappa_k'$, $\kappa_{-k}'$, $j=1,\dots,S'$,
all of which take values in $\mathbb{Z}_2$.  In other words, we introduce $2(S+S')$ classical bits.

Consider the family of states
\begin{equation}
  \label{completefamily}
   \left( \prod_{j=1}^{S'} (\Psi_{k}')^{\kappa'_{k}}(\Psi'_{-k})^{\kappa'_{-k}} \right)
  \left( \prod_{j=1}^S (\Psi_{k})^{\kappa_{k}}(\Psi_{-k})^{\kappa_{-k}} \right)
  \refst.
\end{equation}
In plain terms, the classical bits control whether or not the corresponding fermionic operator is used to act on the
reference state.

\begin{conje}
 If $M=6k$ and $M=6k+2$, then the family of states \eqref{completefamily} spans the whole Hilbert space in the defining
 representation.
\end{conje}

We did not manage to prove the conjecture, however, it was numerically observed to be true for the system sizes $M=6, 8,
12$, for randomly chosen couplings.

The counting of the dimensionality clearly supports the claim: the naive counting of the family of states
\eqref{completefamily} gives $4^{S+S'}$ different states, and using previous results for $S$ and $S'$ we find that this
number correctly reproduces the full Hilbert space dimension $2^M$. It also follows from the fermionic algebra, that two
states in the family, which have finite norm, can not be equal to each other. However, we can not prove that every
member in \eqref{completefamily} has a finite norm, because we lack the analogue of Theorem \ref{thm:norm}.

\subsection{Entanglement}

It is an interesting question, what is the entanglement content of the states that we created. 
Currently there is no exact method available to perform such computations. 

However, we can show that the action of every raising/lowering operator can add only an $\ordo(1)$ contribution
to the bi-partite spatial entanglement. This follows from the fact that the $\Psi_k$ can be written as a Matrix Product
Operator (MPO) with bond dimension 9. Such an MPO representation can be derived using the MPO representation of the
transfer matrix, originally derived in \cite{ffd}. Details are presented in Appendix \ref{sec:MPOT}.

Therefore, the single-particle and few-particle states have satisfy an area law in the $M\to\infty$ limit, if we keep
the number of excitations fixed.

\section{Discussion}

In this paper we discussed two closely related issues. First, we treated the algebras behind these models, by
establishing connections to the mathematical theory of graph-Clifford algebras. We also introduced the ``defining
representation'' of the tensor product of two identical graph-Clifford algebras; to out best knowledge this is a new
result. This representation naturally reproduces prominent models, such as the XY model in a particular presentation,
and the FFD model of Fendley.

As a second direction, we proposed to study the
anti-symmetric combination of Hamiltonians $H_A=H-\tilde H$. We showed that for this combination we can create few-body
eigenstates using the free fermionic operators. Afterwards we discussed various technical details in the case of the FFD
model. We also studied the problem of how to create all eigenstates of $H_A$ using two families of  fermions: the
spectrum generating family and the auxiliary family introduced in \cite{eric-lorenzo-ffd-1}.

In follow up papers we will introduce new explicit formulas for the fermionic operators, which will help us study their
physical properties, such as the distribution of the operator weight along the spin chain. Furthermore, we will show how
to connect this formalism with the traditional methods of Jordan-Wigner solvability.

In future work it would be interesting to apply the graph-Clifford algebra framework also to other models, which don't
have the free 
fermionic structures. In Appendix \ref{sec:periodic} we treated the FFD model with periodic boundary conditions, and
derived a few selected exact eigenstates. It would be interesting to consider $H_A$ in non-integrable model as well. 
If we consider $H_A$ in the defining representation, then the reference states are
exact null vectors. Such models do not have $U(1)$ invariance, and the existence of the exact null vector follows from
the particular anti-symmetry of $H_A$. It is an interesting question, whether other exact eigenstates could be found for
such a more general class of models.

We hope to return to these questions in future work.

\begin{acknowledgments}
We are thankful to P\'eter Csikv\'ari for useful discussions. 
B.P. and I.V. were supported by the Hungarian National Research,
Development and Innovation Office, NKFIH Grant No. K-145904 and B.P. was supported by the NKFIH excellence grant
TKP2021\_NKTA\_64.
K.F. was supported by MEXT KAKENHI Grant-in-Aid for Transformative Research Areas A “Extreme Universe” (KAKENHI Grant No. JP21H05191) and KAKENHI Grant No. JP25K23354 from the Japan Society for the Promotion of Science (JSPS).
\end{acknowledgments}

\appendix

\section{An MPO for the fermionic operators}

\label{sec:MPOT}

For the FFD model, the transfer matrix has an MPO representation with
bond dimension $\kappa=3$, applicable to both open and closed chains \cite{ffd}. 

We define the (operator valued) $3\times3$ matrix
\begin{equation}
T_{t_{0},t_{L}}(u)=\sum_{\{a_{m}\}_{m=1}^{L},\{t_{m}\}_{m=1}^{L-1}}\prod_{i=1}^{L}A_{t_{i-1}t_{i}}^{a_{i}}(ub_{i})\sigma_{i}^{a_{i}},\label{eq:TMPO}
\end{equation}
where $\sigma_{i}^{a}$ is $\mathbb{1},X_{i},Z_{i}$ for $a=0,x,z$
respectively. The sums for the physical and auxiliary indices run
over $a_{m},t_{m}=0,x,z$, and the uncontracted ones $t_{0},t_{L}$
can pick the same values. The tensors $A_{t',t}^{a}(u)$ have the
following non-vanishing components:
\begin{align}
A_{00}^{0}(u) & =1, & A_{0z}^{z}(u) & =u, & A_{x0}^{x}(u) & =1, & A_{zx}^{x}(u) & =-1.
\end{align}
Note that for inhomogeneous couplings $b_{i}$ in the Hamiltonian \eqref{Hdef} the parameter $u$ in the tensors of (\ref{eq:TMPO}) acquires spatial dependence.
To arrive at the transfer matrix, we have to take the $0,0$ component
of (\ref{eq:TMPO}) in the case of open boundary conditions (with $b_{M+1}=b_{M+2}=0$ for consistency), while
for the periodic boundary conditions we have to take the trace:
\begin{equation}
T_{M}(u)=\begin{cases}
\quad\quad\ \ T_{0,0}(u) & \begin{array}{cc}
L=M+2 & \text{(OBC)}\end{array}\\
\sum\limits_{t=0,x,z}T_{t,t}(u) & \begin{array}{cc}
L=M & \quad\ \ \,\text{(PBC)}\end{array}
\end{cases}.
\end{equation}

The fermionic modes $\Psi_{\pm k}$ of the open chain are then proportional
to the operator $T_{0,0}(\mp u_{k})\chi T_{0,0}(\pm u_{k})$ with $L=M+2$.
Assuming $\chi$ is a Pauli string, the structure of this formula
is similar to (\ref{eq:TMPO}) itself, with the caveat that now the
dimension of the MPO is $\kappa^{2}=9$. For example, if we choose
$\chi=Z_{1}$ the fermions may be written as
\begin{align}
	\Psi_{\pm k} =\frac{1}{\mathcal{N}_{k}}& \sum_{\alpha_{1},\ldots,\alpha_{M}=0}^{3}\sum_{\tau_{1},\ldots,\tau_{M+1}=0}^{8} \label{eq:MPOpsi}  \\
	&\tilde{\mathcal{A}}_{0,\tau_{1}}^{\alpha_{1}}\mathcal{A}_{\tau_{1},\tau_{2}}^{\alpha_{2}}
	 \ldots\mathcal{A}_{\tau_{M+1},0}^{\alpha_{M+2}}\sigma_{1}^{\alpha_{1}}\sigma_{2}^{\alpha_{2}}\ldots\sigma_{M+2}^{\alpha_{M+2}}, \nonumber
\end{align}
where $\sigma_{i}^{\alpha}$ is the $\alpha^{\text{th}}$ Pauli matrix
acting on site $i$, including $\alpha=0$ for the identity. Here the non-zero elements of the tensors $\mathcal{A}_{\tau',\tau}^{\alpha}$
are
\begin{align}
\mathcal{A}_{0,0}^{0} & =\mathcal{A}_{4,0}^{0}=\mathcal{A}_{8,4}^{0}=-\mathcal{A}_{5,1}^{0}=-\mathcal{A}_{7,3}^{0} \nonumber \\
	& =\mathcal{A}_{1,0}^{1}=\mathcal{A}_{3,0}^{1}=-\mathcal{A}_{2,1}^{1}=-\mathcal{A}_{6,3}^{1}=1  \\
i\mathcal{A}_{3,2}^{2} & =i\mathcal{A}_{1,6}^{2}=-i\mathcal{A}_{6,5}^{2}=-i\mathcal{A}_{2,7}^{2} \nonumber \\
	&=\mathcal{A}_{0,2}^{3}=-\mathcal{A}_{0,6}^{3}=u \\
\mathcal{A}_{0,8}^{0} & =-u^{2},
\end{align}
and for inhomogeneous couplings it is understood that $\mathcal{A}_{\tau_{m-1},\tau_{m}}^{\alpha_{m}}$
has to be parameterized with $u=\pm u_{k}b_{m}$, where $\pm u_{k}$
is the root corresponding to the mode $\Psi_{\pm k}$. The first tensor has to be modified
due to the presence of the edge operator that acts on the first site,
such that
\begin{align}
\mathcal{\tilde{A}}_{\tau',\tau}^{0} & =\mathcal{A}_{\tau',\tau}^{3}, & \mathcal{\tilde{A}}_{\tau',\tau}^{1} & =i\theta_{\tau',\tau}\mathcal{A}_{\tau',\tau}^{2},\\
\mathcal{\tilde{A}}_{\tau',\tau}^{3} & =\theta_{\tau',\tau}\mathcal{A}_{\tau',\tau}^{0} & \mathcal{\tilde{A}}_{\tau',\tau}^{2} & =i\theta_{\tau',\tau}\mathcal{A}_{\tau',\tau}^{1},
\end{align}
where $\theta_{\tau',\tau}=1$ if the distance of $\tau$ and $\tau'$
(in the modulo $9$ sense) is smaller than $2$ and $\theta_{\tau',\tau}=-1$
otherwise.

\section{Selected eigenstates of the periodic anti-symmetric FFD model}

\label{sec:periodic}

In this Section we study the eigenstates of the anti-symmetric Hamiltonian
\begin{equation}
  \label{HA2}
  H_A=\sum_{j=1}^M (Z_jZ_{j+1}X_{j+2}-X_jZ_{j+1}Z_{j+2})
\end{equation}
with periodic boundary conditions. This model is integrable \cite{ffd}, however, its exact eigenstates have not yet been found.
We use the ideas developed in Section \ref{sec:opstate} to find a few selected exact eigenstates of the model.

Furthermore, we will also consider the models of
\cite{alcaraz-medium-fermion-1,alcaraz-medium-fermion-2,alcaraz-medium-fermion-3}. More concretely, we consider the
family of anti-symmetric Hamiltonians
\begin{equation}
  \label{HAN}
  H_A^{(N)}=\sum_{j=1}^M (h_j-\tilde h_j),
\end{equation}
where now
\begin{equation}
  \begin{split}
    h_j&=Z_{j}\dots Z_{j+N-1} X_{j+N}\\
    \tilde h_j&=X_jZ_{j+1}\dots Z_{j+N}.
  \end{split}
\end{equation}

The leading thought is the same: we use the operator-state correspondence to map symmetry operators of $H$ into
null-vectors
of $H_A$. However, the operator-state correspondence that we developed is not compatible with the periodic boundary
conditions. Instead, we will argue as follows. We will consider states with manifest translational invariance, where the
null-vector condition is proven via the operator-state correspondence
in the bulk, and then can be extended to the boundary  by translational invariance.

Our first candidate is derived from an unpublished alternative transfer matrix of the periodic Hamiltonian $H$. We do
not discuss that transfer matrix here, as it is not relevant to our main derivations. Instead we present the result,
which is surprisingly simple. Let $\ket{v}\in\complex^2$ be an arbitrary vector. Then
\begin{equation}
  \otimes_{j=1}^M \ket{v}
\end{equation}
is a null vector of $H_A$. In fact, any such vector is a null vector for every Hamiltonian $H_A^{(N)}$. This statement
can be proven via a direct computation, by induction over $N$.

We can be more concrete and parameterize the product state as
\begin{equation}
  \ket{\Psi(a)}=  \otimes_{j=1}^M
  \begin{pmatrix}
    1 \\ a
  \end{pmatrix}.
\end{equation}
We can now take derivatives with respect to $a$. The $n$th derivative gives the Dicke state $\ket{D^{(M)}_n}$ with $n$
down spins, and it
follows that all these vectors are null-vectors of $H_A^{(N)}$.

We can also use the structure of the conserved charges \eqref{Qalpha} to find new eigenstates of $H_A$ in the periodic
case. The idea is to construct a state from the operators \eqref{Qalpha} and to extend it to the periodic case. In the
open case we get a null-vector of $H_A$, but due to the locality of $H$ and  translational invariance the state has to
be a null-vector also in the periodic case. Now we will consider the Hamiltonian \eqref{HA2}, the extension to
\eqref{HAN} is straightforward.

For computational basis states with $n$ spin downs placed at sites $a_1,\dots,a_n$ we use the notation
\begin{equation}
  \ket{a_1,\dots,a_n}.
\end{equation}
Then the first new null-vector is
\begin{equation}
  \ket{\Psi_2}=\sum_{a_1+2<a_2} \ket{a_1,a_2},
\end{equation}
where periodic boundary conditions are understood in the constraint for the summation. Proceeding further we obtain
states
\begin{equation}
  \ket{\Psi_n}=\sum_{a_1,\dots,a_n}\ket{a_1,\dots,a_n},
\end{equation}
where the constraint is
\begin{equation}
  a_j+2<a_{j+1}
\end{equation}
and periodic boundary conditions are understood for applying the constraint. These states can be considered as a
particular type of ``constrained Dicke states''.

There are even more possibilities for constructing null-vectors of these models. One could take for example local
extensive charges of $H$ in the periodic case, and construct eigenstates for $H_A$ from them. The general structure of local
charges for $H$ is not yet published,
although the some results are scattered in the literature
\cite{sajat-medium,sajat-FP-model}, and they will be treated in \cite{kohei-charges}.

For example it is known that the first non-trivial local charge for the periodic $H$
is given by \cite{sajat-medium}
\begin{equation}
  \sum_j  [h_j,h_{j+1}+h_{j+2}]=2\sum h_j(h_{j+1}+h_{j+2}).
\end{equation}
From this charge we can read off the following null-vector of $H_A$:
\begin{equation}
\sum_{a=1}^M  (\ket{a,a+1}+\ket{a,a+2}).
\end{equation}
Note that this null-vector is actually the difference of a Dicke state and a constrained Dicke state, both of which are
null-vectors of $H_A$, as argued above.

Further null-vectors can be found by considering other charges of $H$ in the periodic case.

Finally we note that in this model we can also find crosscap states as exact eigenstates, using the results of
\cite{sajat-crosscap}. Let $\ket{B}$ be the standard Bell pair given by
\begin{equation}
  \ket{B}=\frac{1}{\sqrt{2}} (\ket{\uparrow\downarrow}-\ket{\downarrow\uparrow}),
\end{equation}
and let $\ket{B}_{jk}$ denote a Bell pair prepared on the pair of sites $j$ and $k$. Assuming that $M$ is even we can
prepare the crosscap state
\begin{equation}
  \otimes_{j=1}^{M/2} \ket{B}_{j,j+M/2}.
\end{equation}
It follows from the results of \cite{sajat-crosscap} that this vector is a null-vector of $H_A$. This statement holds
generally for $H_A^{(N)}$ with even $N$. For odd values of $N$ other crosscap states can be found, but we do not pursue
this direction further here.

%\bibliography{pozsi-general}

%apsrev4-2.bst 2019-01-14 (MD) hand-edited version of apsrev4-1.bst
%Control: key (0)
%Control: author (8) initials jnrlst
%Control: editor formatted (1) identically to author
%Control: production of article title (0) allowed
%Control: page (0) single
%Control: year (1) truncated
%Control: production of eprint (0) enabled
%

\end{document}